\numberwithin{equation}{section}
\theoremstyle{plain}
\newtheorem*{theorem*}{Theorem}
\newtheorem{theorem}{Theorem}
\newtheorem{corollary}{Corollary}
\newtheorem{proposition}{Proposition}
\newtheorem{lemma}{Lemma}
\newtheorem{Prop}{Proposition}
\newtheorem{Rem}{Remark}
\numberwithin{equation}{section}
\theoremstyle{plain}
\theoremstyle{plain}
\renewcommand{\(}{$\,}
\renewcommand{\)}{\,$}
\renewcommand{\bar}[1]{\overline{#1}}
\renewcommand{\hat}[1]{\widehat{#1}}
\renewcommand{\tilde}[1]{\widetilde{#1}}
\renewcommand{\Gamma}{\varGamma}
\renewcommand{\Pi}{\varPi}
\renewcommand{\Sigma}{\varSigma}
\renewcommand{\Delta}{\varDelta}
\renewcommand{\Lambda}{\varLambda}
\renewcommand{\Psi}{\varPsi}
\renewcommand{\Phi}{\varPhi}
\renewcommand{\Theta}{\varTheta}
\renewcommand{\Omega}{\varOmega}
\renewcommand{\Xi}{\varXi}
\renewcommand{\Upsilon}{\varUpsilon}
\def\P{I\!\!P}
\def\kappa{\varkappa}
\def\mub{\bar{\mu}}
\def\mun{\bar{\mu}_n}
\def\Sb{\bar{S}}
\definecolor{blue(pigment)}{rgb}{0.2, 0.2, 0.6}
\definecolor{ultramarine}{rgb}{0.07, 0.04, 0.56}
\definecolor{darkspringgreen}{rgb}{0.09, 0.45, 0.27}
\definecolor{hookersgreen}{rgb}{0.0, 0.44, 0.0}
\definecolor{plum(traditional)}{rgb}{0.56, 0.27, 0.52}
\definecolor{purple(html/css)}{rgb}{0.5, 0.0, 0.5}
\definecolor{magenta(dye)}{rgb}{0.79, 0.08, 0.48}
\renewcommand{\(}{$\,}
\renewcommand{\)}{\,$}
\def\P{I\!\!P}
\def\kappa{\varkappa}
\def\mub{\bar{\mu}}
\def\mun{\bar{\mu}_n}
\def\Sb{\bar{S}}
\begin{document}
\begin{frontmatter}
\title{Gaussian processes with multidimensional distribution inputs via optimal transport and Hilbertian embedding}
\runtitle{Gaussian processes via optimal transport}

\begin{aug}
\author{\fnms{Fran\c{c}ois} \snm{Bachoc}\ead[label=e1]{francois.bachoc@math.univ-toulouse.fr}}
\address{Institut de Math\'ematiques de Toulouse\\
	\printead{e1}}

\author{\fnms{Alexandra} \snm{Suvorikova}\ead[label=e2]{suvorikova@math.uni-potsdam.de}}

\address{Potsdam University\\
	\printead{e2}}

\author{\fnms{David} \snm{Ginsbourger}
	\ead[label=e3]{david.ginsbourger@stat.unibe.ch}}

\address{Idiap Research Institute\\
	University of Bern\\
	\printead{e3}}

\author{\fnms{Jean-Michel} \snm{Loubes}
	\ead[label=e4]{loubes@math.univ-toulouse.fr}}

\address{Institut de Math\'ematiques de Toulouse\\
	\printead{e4}}

\author{\fnms{Vladimir} \snm{Spokoiny}
	\ead[label=e5]{vladimir.spokoiny@wias-berlin.de}}

\address{Weierstrass Institute for Applied Analysis and Stochastics\\
	\printead{e5}}

\runauthor{Bachoc, Suvorikova, Ginsbourger, Loubes, Spokoiny}


\end{aug}

\begin{abstract}
In this work, we investigate Gaussian Processes indexed by multidimensional distributions. While directly constructing radial positive definite kernels based on the Wasserstein distance has been proven to be possible in the unidimensional case, such constructions do not extend well to the multidimensional case as we illustrate here. To tackle the problem of defining positive definite kernels between multivariate distributions based on optimal transport, we appeal instead to Hilbert space embeddings relying on optimal transport maps to a reference distribution, that we suggest to take as a Wasserstein barycenter. We characterize in turn radial positive definite kernels on Hilbert spaces, and show that the covariance parameters of virtually all parametric families of covariance functions are microergodic in the case of (infinite- dimensional) Hilbert spaces. We also investigate statistical properties of our suggested positive definite kernels on multidimensional distributions, with a focus on consistency when a population Wasserstein barycenter is replaced by an empirical barycenter and additional explicit results in the special case of Gaussian distributions. Finally, we study the Gaussian process methodology based on our suggested positive definite kernels in regression problems with multidimensional distribution inputs, on simulation data stemming both from synthetic examples and from a mechanical engineering test case.
\end{abstract}



\begin{keyword}[class=MSC]
\kwd[Primary ]{0G15}
\end{keyword}

\begin{keyword}
\kwd{Kernel methods}
\kwd{Wasserstein Distance}
\kwd{Hilbert space embeddings}
\end{keyword}

\end{frontmatter}

\section{Introduction}
Gaussian process models are widely used in fields such as geostatistics, computer experiments and machine learning \cite{rasmussen06gaussian}, \citet{santner03design}. In a nutshell, Gaussian process modelling consists in assuming for an unknown function of interest to be one realisation of a Gaussian process, or equivalently of a Gaussian random field indexed by the source space of the objective function, and is often cast as part of the Bayesian arsenal for non-parametric estimation in function spaces. For instance, in computer experiments, the input points of the function are simulation input parameters and the output values are quantities of interest obtained from simulation responses.  Furthermore, there has been a huge amount of literature dealing with the use of Gaussian Processes in Machine Learning over the last decade. We refer for instance to \citet{rasmussen2004gaussian}, \citet{scholkopf2002learning} or \citet{cristianini2000support} and references therein.

Gaussian process models heavily rely on the specification of a covariance function, or ``kernel'', that characterises linear dependencies between values of the process at different observation points. In fact, the kernel, which can be seen as a similarity measure between locations in index space, also induces a (pseudo-)metric on the index space often referred to as the ``canonical metric associated with the kernel'' via the variogram function of geostatisticians. A natural question for a given kernel is how those inherently associated notions of similarity/dissimilarity interplay with prescribed metrics on the index space. In Euclidean space, one often speaks of radial kernel for those covariance functions that are explicitly depending on the Euclidean distance between points. Radial kernels with respect to other metrics have also be investigated, see e.g. kernels writing as functions of the $\ell^1$ distance in multivariate Euclidean spaces \citet{wendland2004scattered}. 

In this paper we consider Gaussian processes indexed by distributions supported on $\mathbb{R}^p$, and we investigate ways to build positive definite kernels based on the Wasserstein distance. Distributional inputs can occur in a number of practical situations and exploring admissible kernels for using Gaussian Process and related methods in this context is a pressing issue. Situations of that kind include the case of uncertain vector inputs to a vector-to-scalar deterministic function, but also a variety of other settings such as histogram inputs standing for instance for ratings from a panel of experts,  compositional data in geosciences, or randomized strategies in a Bayesian game-theoretic framework.  

In some situations, distribution-valued inputs may arise as a convenient way to describe complex objects and media, e.g. a number of physical simulations require maps or parameter fields as inputs, and in some cases it can be beneficial to reparametrize them so as to work with probabiliy distributions. For instance in \citet{ginsbourger2016design}, the computer model \citet{CASTEM} is studied, where the input simulation parameter consists of a set of disks located on a unit square $[0,1]^2$, modelling a material, for which a stress measure is associated. A Gaussian process model on distributions enables to treat the input sets of disks as measures, and to model the stress values as stemming from a random field indexed by the input distributions.

In this framework, a natural aim is to construct covariance functions for Gaussian processes indexed by such inputs, that is constructing positive definite kernels on sets of probability measures.

The simplest method is perhaps to compare a set of parametric features built from the probability distributions, such as the mean or the higher moments. This approach is limited, as the effect of such parameters does not take the whole distribution into account. Specific positive definite kernels should be designed in order to map distributions into a reproducing kernel Hilbert space in which the whole arsenal of kernel methods can be extended to probability measures. This issue has recently been considered in \citet{2016arXiv160509522M} or \citet{KolouriZouRohde}. We aim at basing these kernels on the Wasserstein, or transport-based, distance which was shown to be relevant and insightful for comparing or studying distributions \citet{villani2009optimal,chernozhukov2017monge,peyre2019computational}.

This issue has been studied for the one dimensional case in \citet{myieee} or in \citet{2018arXiv180610493T}, using the special expression of the Wasserstein distance in dimension 1. Yet this case uses  the property of the optimal coupling with the uniform random variable which is very specific to the one dimensional case. The positive definite kernels provided in the one dimensional case are not positive definite any longer, when they are extended to higher dimensions, as we illustrate numerically in Section \ref{s:num}.\\

\indent In the general dimension case, in order to build a positive definite kernel from the Wasserstein distance, we associate to each input distribution its optimal transport map to a reference distribution. We then provide positive definite kernels on the Hilbert space corresponding to these optimal transport maps. This results in a positive definite kernel for multidimensional distributions. As a reference distribution, we recommend to take the empirical Fr\'echet mean (or barycenter) of the distributions. We remark that the notion of Wasserstein barycenters and their use in machine learning and in statistics has been tackled  recently in, for instance, \citet{agueh2011barycenters}, \citet{bigot2012characterization}, \citet{boissard2015distribution}.  Although computational aspects of optimal transports are a difficult issue, substantial work has been conducted to provide feasible algorithms to compute barycenters and optimal transport maps, see for instance \citet{kroshnin2019statistical}, ~\citet{uribe2018distributed}, or  \citet{peyre2019computational} and references therein.
Thus our suggested procedure is feasible in practice, as is confirmed by our simulation results on simulated data and on the data from the CASTEM computer model \citet{CASTEM,ginsbourger2016design}.

We also characterize all the continuous radial positive definite kernels on Hilbert spaces. This is carried out by showing that they coincide with continuous radial positive definite kernels on Euclidean spaces of arbitrary dimension, and by revisiting existing results for the Euclidean case \citet{wendland2004scattered}. In addition, we show that when considering parametric families of covariance functions for Gaussian processes on infinite dimensional Hilbert spaces, all the covariance parameters are microergodic in general. Microergodicity is an important concept for the asymptotic analysis of Gaussian processes \citet{stein99interpolation,zhang04inconsistent,And2010}.

\vskip .1in
We provide furthermore statistical results related to our positive definite kernel construction. We study the asymptotic closeness of the two kernels obtained by taking the empirical barycenter and the population barycenter as reference distributions. We obtain additional more quantitative results in the special case of Gaussian input distributions. We also discuss stationarity and universality.

In the aforementioned simulations, we compare the Gaussian process regression model obtained from our suggested positive definite kernels with the distribution regression procedure of \citet{pmlr-v31-poczos13a}. The results show the benefit of our method.

\vskip .1in
The paper falls into the following parts. In Section \ref{s:frame} we recall some definitions on kernels and on the notion of optimal transport, Wasserstein distance and Wasserstein barycenter of distributions . We also provide our positive definite kernel construction.
The analysis of positive definite kernels and Gaussian processes on Hilbert spaces is provided in Section \ref{s:properties}. Section \ref{s:comput} is devoted to the statistical results related to our kernel construction. The simulation results are provided in Section \ref{s:num}. 
Conclusions are discussed in Section \ref{s:4uSasha}.
The  proofs are postponed to the appendix.

\section{Construction of positive definite kernels for distributions with Hilbert space embedding and optimal transport} \label{s:frame}

\subsection{Background} \label{s:smallintro}

Gaussian process models are now widely used in fields such as geostatistics, computer experiments or machine learning \citet{rasmussen06gaussian}, \citet{santner03design}. A Gaussian process model consists in modelling an unknown function as a realisation of a Gaussian process, and hence corresponds to a functional Bayesian framework. For instance, in computer experiments, the input points of the function are simulation parameters and the output values are quantities of interest obtained from the simulations. 
In this paper we focus on Gaussian processes for which the input parameters are in $\mathcal{P}(\mathbb{R}^p)$ the set of distributions supported on $\mathbb{R}^p$. To study such models, Gaussian Processes must be defined over the set of distributions.  \vskip .1in

Let us recall that a Gaussian process $(Y_x)_{x \in E}$ indexed by a set $E$ is entirely characterised by its mean and covariance functions. A covariance function is defined by $(x,y) \in E \times E \to \mathrm{Cov}(Y_x,Y_y)$.
In general, a function $K : E \times E \mapsto \mathbb{R}$ is actually the covariance of a random process if and only if it is a \emph{positive definite kernel}, that is for every $x_1,\cdots,x_n \in E$ and $\lambda_1,\cdots,\lambda_n \in \mathbb{R}, $
\begin{equation} \label{eq:ineq_positive_definite} \sum_{i,j=1}^n \lambda_i \lambda_j K(x_i,x_j) \geq 0. \end{equation}
In this case we say that $K$ is a \emph{covariance kernel}.
If the quadratic form \eqref{eq:ineq_positive_definite} is always strictly positive when $x_1,\ldots,x_n$ are two-by-two distinct, then we say that $K$ is a  \emph{strictly positive definite kernel}.

We also say that $K$ is a \emph{conditionally negative definite kernel} if the quadratic form in \eqref{eq:ineq_positive_definite} is non-positive when $\sum_{i=1}^n \lambda_i = 0$.

The notions of Wasserstein distance and optimal transport will be central to our construction of positive definite kernels on $\mathcal{P}(\mathbb{R}^p)$. Let us introduce them now (see also \citet{villani2009optimal}). 
Let us consider the set $\mathcal{W}_2({\mathbb{R}^p})$ of probability measures on $\mathbb{R}^p$ with finite moments of order two.
For two $\mu,\nu$ in $\mathcal{W}_2\left( \mathbb{R}^p \right)$ , we denote by $\Pi(\mu, \nu)$ the set of all
probability measures $\pi$ over the product set $\mathbb{R}^p \times\mathbb{R}^p $
with first (resp. second) marginal $\mu$ (resp. $\nu$).

The transportation cost with quadratic cost function, or quadratic transportation cost, between these two measures $\mu$ and $\nu$ is defined as

\begin{equation} \label{eq:quadratic_cost}
\mathcal{T}_2(\mu, \nu) = \inf_{ \pi \in \Pi(\mu, \nu)} \int \left\| x - y\right\| ^2 d \pi(x,y).
\end{equation}
In the above display and throughout this paper, we let $\| \cdot \|$ be the Euclidean norm on any Euclidean space.
This transportation cost allows to endow the set $\mathcal{W}_2\left(\mathbb{R}^p\right)$
with a metric by defining  the quadratic  Monge-Kantorovich, or quadratic Wasserstein distance between $\mu$ and $\nu$ as
\begin{equation} \label{eq:wasserstein_distance}
W_2(\mu, \nu)  = \mathcal{T}_2(\mu, \nu) ^{1/2}.
\end{equation}
A probability measure $\pi$ in $\Pi(\mu,\nu)$ realizing the infimum in \eqref{eq:quadratic_cost} is called an optimal coupling.
A random vector $(X_1,X_2)$ with distribution $\pi$ in $\Pi(\mu,\nu)$ realizing this infimum is also called an optimal coupling.

Our aim is to base our suggested covariance functions on the notion of optimal transport.
Indeed, the Wasserstein distance has been shown to be a very useful tool in statistics and machine learning \citet{peyre2019computational,chernozhukov2017monge}.

In the one dimensional case, it is actually possible to create covariance functions which values at $\mu , \nu \in \mathcal{W}_2( \mathbb{R}^p )$ are functions of $W_2(\mu , \nu)$ \citet{myieee}. Indeed, in this case,
using a covariance based on the Wasserstein distance amounts to using the following well-known optimal coupling (see \citet{villani2009optimal}). For all $\mu \in \mathcal{P}(\mathbb{R})$ with finite second order moments, let
\begin{equation} Z_\mu:=F^{-1}_{\mu}(U),\end{equation}
where $F^{-1}_\mu$ is defined as 
$$
F^{-1}_\mu(t) = \inf \{u, F_\mu(u) \geq t \},
$$ 
and denotes the quantile function of the distribution $\mu$ and where $U$ is a uniform random variable on $[0,1]$.
This coupling, given by $\{ Z_{\mu} \}$, can be seen as a non-Gaussian random field indexed by the set of distributions on the real line with finite second order moments.
As such, its variogram 
\begin{equation} (\mu,\nu) \mapsto \mathbb{E}(Z_\mu-Z_\nu)^2 \end{equation}
defines a conditionally negative definite kernel, equal to $W_2^2(\mu,\nu)$ since the coupling $(Z_\mu)$ is optimal. This kernel can be used to construct families of covariance functions based on the one-dimensional Wasserstein distance, see \citet{myieee}. 

In general dimension, however, there is no indication that functions of the form $(\mu , \nu) \to F \left( W_2(\mu,\nu) \right)$, where $F(|\cdot|)$ is a standard covariance function on $\mathbb{R}$, are positive definite kernels. 
For instance, in Section~\ref{s:num} we provide simulations where the function $(\mu , \nu) \to \exp( - W_2(\mu,\nu)^2)$ fails to be a positive definite kernel in the case $p=2$ (while it is indeed a valid kernel when $p=1$, see \citet{myieee}).

To tackle this issue, we will use the notion of Wasserstein barycenters, that we now introduce.
When dealing with a collection of distributions $\mu_1,\dots,\mu_n$,  we can define a notion of variation of these distributions. For any $\nu \in \mathcal{W}_2({\mathbb{R}^p})$, set 
$$ {\rm Var}(\nu)=\sum_{i=1}^n W_2^2(\nu,\mu_i).$$
Finding the distribution minimizing the variance of the distributions has been tackled when defining the notion of barycenter of distributions with respect to Wasserstein's distance in the seminal work of~\citet{agueh2011barycenters}.  More precisely, given $p\geq 1$, they provide conditions to ensure existence and uniqueness of the barycenter of the probability measures $(\mu_i)_{1\leq i \leq n}$  with weights $(\lambda_i)_{1\leq i \leq n}$, i.e. a minimizer of the following criterion
\begin{equation} \label{carlier}
\nu \mapsto \sum_{i=1}^n \lambda_i W_2^2(\nu,\mu_i).
\end{equation}
In the last years several works have studied the empirical properties of the barycenters and their applications to several fields. We refer for instance to \citet{bigot2012characterization,boissard2015distribution} and references therein. Hence the Wasserstein barycenter or Fr\'echet mean of distribution appears to be a meaningful feature to represent the mean variations of a set of distributions.

This notion of Wasserstein barycenter has been recently extended to distributions defined on $\mathcal{W}_2(\mathcal{W}_2(\mathbb{R}^p))$, that is the set of measures on $\mathcal{W}_2(\mathbb{R}^p)$ with finite expected variances.
Let $\mathbb{P}$ be a distribution  in $ \mathcal{W}_2(\mathcal{W}_2(\mathbb{R}^p))$ and consider $\mu_1,\dots,\mu_n$ i.i.d probabilities drawn according to the distribution $\mathbb{P}$. In this framework, the Wasserstein distance between distributions on $\mathcal{W}_2(\mathbb{R}^p)$ is defined, for any $\nu \in \mathcal{W}_2(\mathbb{R}^p)$, as 
\begin{equation}
W_2^2(\mathbb{P},\delta_{\nu})  = \int W^2_2(\nu,\mu) d\mathbb{P}(\mu). \label{thedist}
\end{equation}
If $\tilde{\mu}$ is a random distribution obeying law $\mathbb{P}$, this corresponds to 
$$
W^2_2(\mathbb{P},\delta_{\nu}) = \mathbb{E}_{\{\tilde{\mu} \sim  \mathbb{P}\}} W^2_2(\tilde{\mu},\nu). 
$$
Note that we use the same notations for the Wasserstein distances over distributions in $\mathcal{W}_2(\mathbb{R}^p)$ and over distributions on distributions in $ \mathcal{W}_2(\mathcal{W}_2(\mathbb{R}^p))$. The space $ \mathcal{W}_2(\mathcal{W}_2(\mathbb{R}^p))$ inherits the properties of the space $ \mathcal{W}_2(\mathbb{R}^p)$ and is a good choice for considering asymptotic properties of
Wasserstein barycenteric sequences.

We define (if it exists) the Wasserstein barycenter of $\mathbb{P}$ as a probability measure $\bar{\mu}$ in $\mathcal{W}_2(\mathbb{R}^p)$ such that 
$$
\int W^2_2(\bar{\mu},\mu) d\mathbb{P}(\mu) = \inf \left\{ \int W^2_2(\nu,\mu) d\mathbb{P}(\mu), \: \nu \in \mathcal{W}_2(\mathbb{R}^p)\right\}.
$$ 

First, we point out that the notion of barycenter developed in \eqref{carlier} also
corresponds  to the barycenter of the atomic probability $\mathbb{P}$ on the Wasserstein space, defined by
\[
\mathbb{P}=\sum_{i=1}^n\lambda_i \delta_{\mu_i}.
\]
We also recall some facts on the Wasserstein barycenter that are used in the rest of the paper. The following theorem from \citet{alvarez2015wide} guarantees the existence and uniqueness of this barycenter under some assumptions.

\begin{theorem}[Existence of a Wasserstein Barycenter] \label{th:bar}
Let $\mathbb{P} \in \mathcal{W}_2(\mathcal{W}_2(\mathbb{R}^p))$. Assume that every distribution in the support of $\mathbb{P}$ is absolutely continuous with respect to Lebesgue measure on $\mathbb{R}^p$. Then there exists a unique distribution  $\bar{\mu} \in \mathcal{P}$ defined as 
\begin{equation} \label{bar}
\bar{\mu} = {\rm arg}\min_{ \nu \in \mathcal{W}_2(\mathbb{R}^p) }  \left\{ \int W^2_2(\nu,\mu) d\mathbb{P}(\mu) \right\}.
\end{equation} 
\end{theorem}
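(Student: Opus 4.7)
I would handle existence and uniqueness separately. Existence will follow from the direct method of the calculus of variations, by establishing properness, coercivity (in the form of uniformly bounded second moments along minimising sequences), and weak lower semi-continuity of the objective functional. Uniqueness will be obtained from a strict convexity argument along a generalized geodesic in Wasserstein space, and this is where the absolute continuity hypothesis enters in an essential way via Brenier's theorem.

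\textbf{Existence.}
Set $F(\nu) := \int W_2^2(\nu, \mu) \, d\mathbb{P}(\mu)$. For any $\mu_0 \in \mathrm{supp}(\mathbb{P})$, the value $F(\mu_0)$ is the second moment of $\mathbb{P}$ centered at $\mu_0$ on the Wasserstein space, which is finite since $\mathbb{P} \in \mathcal{W}_2(\mathcal{W}_2(\mathbb{R}^p))$; in particular $F \not\equiv +\infty$. For a minimising sequence $\{\nu_n\}$, integrating the pointwise-in-$\mu$ bound $W_2^2(\nu_n, \mu_0) \le 2 W_2^2(\nu_n, \mu) + 2 W_2^2(\mu, \mu_0)$ (from the triangle inequality) against $\mathbb{P}(d\mu)$ yields $W_2^2(\nu_n, \mu_0) \le 2 F(\nu_n) + 2 \int W_2^2(\mu, \mu_0) \, d\mathbb{P}(\mu)$, a quantity uniformly bounded in $n$. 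Hence $\{\nu_n\}$ has uniformly bounded second moments, is tight, and admits a weakly convergent subsequence $\nu_{n_k} \to \bar{\mu}$ with $\bar{\mu} \in \mathcal{W}_2(\mathbb{R}^p)$ by Fatou applied to $\|x\|^2$. Combining the weak lower semi-continuity of $W_2^2(\cdot, \mu)$ with Fatou's lemma for the integral over $\mathbb{P}(d\mu)$ gives $F(\bar{\mu}) \le \liminf_k F(\nu_{n_k}) = \inf F$, so $\bar{\mu}$ is a minimiser.

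\textbf{Uniqueness.}
Suppose $\bar{\mu}^0 \ne \bar{\mu}^1$ are two minimisers. Fix any $\mu^* \in \mathrm{supp}(\mathbb{P})$; by hypothesis $\mu^*$ is absolutely continuous, so Brenier's theorem supplies optimal transport maps $T_0, T_1$ from $\mu^*$ to $\bar{\mu}^0, \bar{\mu}^1$ respectively. Define the generalized midpoint
$$ \bar{\mu}^{1/2} := \bigl(\tfrac{1}{2} T_0 + \tfrac{1}{2} T_1\bigr)_{\#} \mu^*, $$
which lies in $\mathcal{W}_2(\mathbb{R}^p)$. The central estimate is that, for every $\sigma \in \mathcal{W}_2(\mathbb{R}^p)$,
$$ W_2^2(\bar{\mu}^{1/2}, \sigma) \le \tfrac{1}{2} W_2^2(\bar{\mu}^0, \sigma) + \tfrac{1}{2} W_2^2(\bar{\mu}^1, \sigma) - \tfrac{1}{4} W_2^2(\bar{\mu}^0, \bar{\mu}^1), $$
which expresses the $2$-convexity of $W_2^2(\cdot, \sigma)$ along the $\mu^*$-based generalized geodesic joining $\bar{\mu}^0$ to $\bar{\mu}^1$. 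Integrating against $\mathbb{P}(d\sigma)$ gives $F(\bar{\mu}^{1/2}) \le \inf F - \tfrac{1}{4} W_2^2(\bar{\mu}^0, \bar{\mu}^1) < \inf F$, contradicting minimality.

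\textbf{Main obstacle.}
The delicate step is the strict convexity estimate above for an arbitrary target $\sigma$. For the special case $\sigma = \mu^*$ it is elementary: the pointwise parallelogram identity applied to $T_0$ and $T_1$ gives the identity with $\int |T_0 - T_1|^2 \, d\mu^*$ in place of $W_2^2(\bar{\mu}^0, \bar{\mu}^1)$, and the replacement is legitimate because $(T_0, T_1)_{\#} \mu^*$ is a coupling (generally non-optimal) of $\bar{\mu}^0$ and $\bar{\mu}^1$, so $\int |T_0 - T_1|^2 \, d\mu^* \ge W_2^2(\bar{\mu}^0, \bar{\mu}^1)$. Extending the inequality to general $\sigma$ requires invoking the theory of convexity along generalized geodesics (cf.\ Ambrosio, Gigli and Savar\'e, \emph{Gradient Flows in Metric Spaces}), and it is precisely at this point that the absolute continuity of $\mu^*$ is indispensable, since it is what licenses the Brenier representation used to define the generalized geodesic itself.
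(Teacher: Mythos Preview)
The paper does not prove this statement itself; it is quoted from \citet{alvarez2015wide}, so there is no in-paper argument to compare against. Your existence argument via the direct method is correct and standard.

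The uniqueness argument has a genuine gap. The inequality you assert for \emph{every} $\sigma\in\mathcal{W}_2(\mathbb{R}^p)$, with the midpoint $\bar\mu^{1/2}$ taken along a generalized geodesic based at a \emph{fixed} $\mu^*$, is precisely the CAT(0) (nonpositive-curvature) comparison inequality; but $\mathcal{W}_2(\mathbb{R}^p)$ is positively curved for $p\ge 2$ and is not a CAT(0) space, so the inequality fails in general. The Ambrosio--Gigli--Savar\'e lemma you invoke establishes $1$-convexity of $W_2^2(\cdot,\sigma)$ only along generalized geodesics \emph{based at $\sigma$ itself}; it does not let you freeze the base at $\mu^*$ and let $\sigma$ range over $\mathrm{supp}\,\mathbb{P}$. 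Concretely, if $S$ denotes the optimal map $\mu^*\to\sigma$, the parallelogram identity yields
\[
W_2^2(\bar\mu^{1/2},\sigma)\le \tfrac12\!\int|T_0-S|^2\,d\mu^*+\tfrac12\!\int|T_1-S|^2\,d\mu^*-\tfrac14\!\int|T_0-T_1|^2\,d\mu^*,
\]
but the couplings $(T_i,S)_\#\mu^*$ of $\bar\mu^i$ and $\sigma$ are generically non-optimal, so $\int|T_i-S|^2\,d\mu^*\ge W_2^2(\bar\mu^i,\sigma)$ and the replacement goes the wrong way on the two positive terms. The absolute continuity hypothesis must therefore be exploited differently: in the cited literature one either bases the generalized geodesic at the \emph{running} integration variable $\mu$ (which is $\mathbb{P}$-a.s.\ absolutely continuous) and then deals with the $\mu$-dependence of the resulting interpolant, or one first establishes that any barycenter is itself absolutely continuous and argues from there. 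Either route requires substantially more than an appeal to the AGS lemma with a single fixed base.
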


Using the expression \eqref{thedist},  we can see that Theorem~\ref{th:bar} can be reformulated as stating the existence of the metric projection of $\mathbb{P}$ onto the subset of $\mathcal{W}_2(\mathcal{W}_2(\mathbb{R}^p))$ composed of Dirac measures.  

Consider a sample of i.i.d random distributions $\mu_i,\: i=1,\dots,n$, drawn from the distribution $\mathbb{P}$ and set ${\bar{\mu}}$ to be its barycenter. Let for fixed $n$, $\bar{\mu}_n$ be the empirical barycenter of the $\mu_1,\dots,\mu_n$, defined as 
$$ 
\sum_{i=1}^n \lambda_i W_2^2(\bar{\mu}_n,\mu_i) = \inf \left\{  \sum_{i=1}^n \lambda_i W_2^2(\nu,\mu_i), \: \nu \in \mathcal{W}_2(\mathbb{R}^p) \right\}, 
$$
with $\lambda_1=...=\lambda_n=1$. 
This empirical barycenter exists and is unique as soon as one of the $\mu_i$ is absolutely continuous w.r.t Lebesgue measure in $\mathbb{R}^p$. \\

The following theorem, from \citet{le2017existence}, states that under uniqueness assumption the empirical Wasserstein barycenter $\bar{\mu}_n$ converges to the population Wasserstein barycenter $\bar{\mu}$. 

\begin{theorem} \label{th:empiricconsisten}
Assume that $\mathbb{P}$ belongs to $\mathcal{W}_2(\mathcal{W}_2(\mathbb{R}^p))$ and that its barycenter is unique. Let $\mu_1,...,\mu_n$ be independently drawn from $\mathbb{P}$ and let $\bar{\mu}_n$ be defined as above.
Then the empirical barycenter $\bar{\mu}_n$ is consistent in the sense that when $n$ goes to infinity we have $$ W_2(\bar{\mu},\bar{\mu}_n) \longrightarrow 0, \: (a.s).$$
\end{theorem}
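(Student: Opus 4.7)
The plan is to treat $\bar{\mu}$ and $\bar{\mu}_n$ as minimizers of the population and normalized empirical Fr\'echet variance functionals
$$
F(\nu) = \int W_2^2(\nu,\mu)\, d\mathbb{P}(\mu), \qquad F_n(\nu) = \frac{1}{n}\sum_{i=1}^n W_2^2(\nu,\mu_i),
$$
and to argue consistency by a standard M-estimator / argmin-continuity scheme in the Polish space $(\mathcal{W}_2(\mathbb{R}^p), W_2)$: first establish a pointwise strong law of large numbers for $F_n$, then $W_2$-relative compactness of $(\bar{\mu}_n)$, then identify every $W_2$-limit point of a subsequence as a minimizer of $F$, and conclude by the uniqueness assumption.

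First I would prove the pointwise SLLN. For each fixed $\nu \in \mathcal{W}_2(\mathbb{R}^p)$ the variables $W_2^2(\nu,\mu_i)$ are i.i.d.\ and integrable (via $W_2^2(\nu,\mu_i) \leq 2 W_2^2(\nu,\nu_0) + 2 W_2^2(\nu_0,\mu_i)$ and $\mathbb{P} \in \mathcal{W}_2(\mathcal{W}_2(\mathbb{R}^p))$), so $F_n(\nu) \to F(\nu)$ a.s. Since $\mathcal{W}_2(\mathbb{R}^p)$ is separable, this convergence can be arranged simultaneously for all $\nu$ in a countable dense set $D$, on a common full-probability event on which I work in the sequel. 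In particular, minimality of $\bar{\mu}_n$ gives $F_n(\bar{\mu}_n) \leq F_n(\bar{\mu}) \to F(\bar{\mu})$, so $\sup_n F_n(\bar{\mu}_n)$ is a.s.\ finite.

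Second, I would upgrade this bound to $W_2$-relative compactness of $(\bar{\mu}_n)$. The Minkowski-type inequality
$$
\Bigl(\int \|x\|^2\, d\bar{\mu}_n(x)\Bigr)^{1/2} \leq W_2(\bar{\mu}_n,\mu_i) + \Bigl(\int \|y\|^2\, d\mu_i(y)\Bigr)^{1/2},
$$
averaged in $i$ and followed by Cauchy-Schwarz, bounds $\int \|x\|^2 d\bar{\mu}_n(x)$ by a constant multiple of $F_n(\bar{\mu}_n) + n^{-1}\sum_i \int \|y\|^2 d\mu_i(y)$, each summand of which is a.s.\ bounded by the previous step and the usual SLLN. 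This second-moment control together with tightness of the empirical mean measures yields uniform integrability of $\|x\|^2$ under $(\bar{\mu}_n)$, which is precisely $W_2$-relative compactness.

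Third, I would identify the limit using the Lipschitz estimate $|W_2^2(\nu_1,\mu) - W_2^2(\nu_2,\mu)| \leq W_2(\nu_1,\nu_2)(W_2(\nu_1,\mu) + W_2(\nu_2,\mu))$. Averaging in $\mu$ gives
$$
|F_n(\nu_1) - F_n(\nu_2)| \leq W_2(\nu_1,\nu_2)\bigl(\sqrt{F_n(\nu_1)} + \sqrt{F_n(\nu_2)}\bigr),
$$
and likewise at the population level for $F$. Given any $W_2$-convergent subsequence $\bar{\mu}_{n_k} \to \mu^*$ extracted by step two, approximating $\mu^*$ in $W_2$ by elements of $D$ and combining this estimate with the SLLN at those dense points yields $F_{n_k}(\bar{\mu}_{n_k}) \to F(\mu^*)$. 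Since also $F_{n_k}(\bar{\mu}_{n_k}) \leq F_{n_k}(\bar{\mu}) \to F(\bar{\mu})$, we obtain $F(\mu^*) \leq F(\bar{\mu})$, so $\mu^* = \bar{\mu}$ by uniqueness. As every subsequence of $(\bar{\mu}_n)$ admits a further $W_2$-convergent subsequence with limit $\bar{\mu}$, the whole sequence converges a.s. The main obstacle is step two: upgrading mere weak tightness of $(\bar{\mu}_n)$ to $W_2$-tightness, i.e.\ uniform integrability of $\|x\|^2$ under the random barycenters; it is at this point that the hypothesis $\mathbb{P} \in \mathcal{W}_2(\mathcal{W}_2(\mathbb{R}^p))$ enters critically, without which one only controls weak limits of $(\bar{\mu}_n)$.
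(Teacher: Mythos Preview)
The paper does not prove this theorem at all: it is quoted as a known result from \citet{le2017existence} and used as a black box. So there is no ``paper's own proof'' to compare with, and your outline is in fact close in spirit to the argmin-consistency scheme used in that reference.

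That said, your step two contains a genuine gap. From the Minkowski/Cauchy--Schwarz manipulation you only obtain
\[
\sup_n \int \|x\|^2 \, d\bar{\mu}_n(x) < \infty \quad \text{a.s.},
\]
i.e.\ a uniform second-moment bound. This gives weak tightness of $(\bar{\mu}_n)$ but \emph{not} uniform integrability of $\|x\|^2$, hence not $W_2$-relative compactness. The sentence ``this second-moment control together with tightness of the empirical mean measures yields uniform integrability of $\|x\|^2$'' is exactly where the argument breaks: bounded second moments are strictly weaker than $W_2$-precompactness (e.g.\ $\mu_n = (1-1/n)\delta_0 + (1/n)\delta_{\sqrt{n}}$ has unit second moment for all $n$ yet no $W_2$-convergent subsequence). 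Since your step three explicitly uses $W_2(\bar{\mu}_{n_k},\mu^*)\to 0$ to transfer the Lipschitz estimate, this gap is fatal to the scheme as written.

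The standard repair, which is essentially what \citet{le2017existence} do, is to avoid $W_2$-compactness altogether: extract a \emph{weak} limit $\mu^*$ from the second-moment bound, use joint lower semicontinuity of $W_2^2$ under weak convergence together with a Skorokhod/Fatou argument on the product space to get $F(\mu^*)\le \liminf F_{n_k}(\bar{\mu}_{n_k}) \le F(\bar{\mu})$, conclude $\mu^*=\bar{\mu}$ by uniqueness, and only then upgrade weak convergence to $W_2$-convergence a posteriori by showing that the second moments also converge (which follows from $F_{n_k}(\bar{\mu}_{n_k})\to F(\bar{\mu})$ and the structure of $W_2^2$). You correctly flag step two as the crux, but the hypothesis $\mathbb{P}\in\mathcal{W}_2(\mathcal{W}_2(\mathbb{R}^p))$ by itself does not deliver the uniform integrability you assert; additional work is needed.
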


The above consistency theorem for the empirical barycenter will be useful in Section \ref{s:comput}, where we will compare asymptotically two versions of our positive definite kernel construction: one based on the empirical barycenter and one based on the population barycenter. We now turn to this positive definite kernel construction.

\subsection{Construction of positive definite kernels by Hilbert space embedding of optimal transport maps}
\label{s:kernel}

The positive definite kernel that we suggest here is based on the notion of optimal transport map, that we now introduce.
Consider a reference distribution $\bar{\mu} \in \mathcal{W}_2(\mathbb{R}^p)$, which choice will be discussed below.
For $\mu \in \mathcal{W}_2( \mathbb{R}^p )$, let $T_{\mu} : \mathbb{R}^p \to \mathbb{R}^p$ be the optimal transportation maps defined by 
\[
{T_{\mu}}_\sharp {\mu} = \bar{\mu}
\]
where $f_{\sharp }\pi=\pi \circ f^{-1}$ is the push-forward measure of a function $f$ from a measure $\pi$, and
\[
|| \mathrm{id}  - T_\mu ||_{ L^2( \mu ) } = W_2( \mu , \bar{\mu} ).
\]

Note that the map $T_{\mu}$ is uniquely defined when $\mu$ is absolutely continuous w.r.t. Lebesgue measure. Furthermore, $T_{\mu}$ is invertible from the support of $\mu$ to the support of $\bar{\mu}$ if also $\bar{\mu}$ is absolutely continuous.

\begin{Rem}
We point out that the existence of transportation maps that can be considered as gradients of convex functions  is commonly referred to as Brenier's theorem and originated from Y. Brenier's work in the analysis and mechanics literature in~\citet{brenier1991polar}.
Much of the current interest in transportation problems emanates from this area of mathematics. We conform to the common use of the name. However, it is worthwile pointing out that a similar statement was established earlier independently in a
probabilistic framework in~\citet{cuesta1989notes} : they show existence of an optimal transport map for quadratic cost over Euclidean and Hilbert spaces, and prove monotonicity of the optimal map in some sense (Zarantarello monotonicity).
\end{Rem}

We are now in position to construct a positive definite kernel, by associating the transport map $T_{\mu}^{-1}$ to each distribution $\mu$, and by using positive definite kernel on the Hilbert space $L^2(\bar{\mu})$, containing these transport maps. The following proposition provides the explicit kernel construction, and proves the positive definiteness.

\begin{proposition} \label{prop:embedding}
Consider a function $F : \mathbb{R}^+ \to \mathbb{R}$ such that, for any Hilbert space $H$ with norm $\| \cdot \|_H$, the function $h_1,h_2 \to  F(  \| h_1 - h_2 \|_H)$ is positive definite on $H$. 
Let $\bar{\mu}$ be a continuous distribution in $\mathcal{W}_2( \mathbb{R}^p )$. 
Consider the function $K$ on the set of continuous distributions in $\mathcal{W}_2( \mathbb{R}^p )$ defined by
\[
K (\mu , \nu )
= 
F( \|  T_{\mu}^{-1} - T_{\nu}^{-1} \|_{L^2(\bar{\mu})}  ).
\]
Then $K$ is positive definite. 
\end{proposition}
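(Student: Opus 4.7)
The proposition reduces, essentially, to observing that the map $\mu \mapsto T_\mu^{-1}$ embeds the (continuous) distributions in $\mathcal{W}_2(\mathbb{R}^p)$ into a single Hilbert space, after which the hypothesis on $F$ delivers the result directly. So my plan is to make this reduction rigorous in three short steps.

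First, I would identify the Hilbert space: take $H = L^2(\bar{\mu}, \mathbb{R}^p)$, the space of (equivalence classes of) square-integrable maps from $\mathbb{R}^p$ to $\mathbb{R}^p$ against the reference measure $\bar{\mu}$, equipped with the usual norm $\|f\|_{L^2(\bar{\mu})}^2 = \int \|f(x)\|^2 \, d\bar{\mu}(x)$. This is a Hilbert space, so the hypothesis on $F$ applies with this choice of $H$.

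Second, I would verify that $T_\mu^{-1}$ is a well-defined element of $H$ for each continuous $\mu$. Brenier's theorem (in the form recalled before the statement, and using that $\bar{\mu}$ is absolutely continuous) gives an optimal transport map $T_{\mu}$ with $(T_{\mu})_\sharp \mu = \bar{\mu}$, uniquely defined on the support of $\mu$; by the remark on monotonicity/invertibility, $T_\mu$ is invertible from the support of $\mu$ to that of $\bar{\mu}$, so $T_\mu^{-1}$ is defined $\bar{\mu}$-a.e. Then a change of variables using the pushforward relation gives
\begin{equation*}
\int \|T_\mu^{-1}(x)\|^2 \, d\bar{\mu}(x) = \int \|y\|^2 \, d\mu(y) < \infty,
\end{equation*}
so $T_\mu^{-1} \in H$. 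Hence, for any finite collection $\mu_1, \ldots, \mu_n$ of continuous distributions in $\mathcal{W}_2(\mathbb{R}^p)$, the vectors $h_i := T_{\mu_i}^{-1}$ are elements of $H$.

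Third, I would apply the assumption. By hypothesis, $(g_1, g_2) \mapsto F(\|g_1 - g_2\|_H)$ is a positive definite kernel on $H$. Therefore, for any $\lambda_1, \ldots, \lambda_n \in \mathbb{R}$,
\begin{equation*}
\sum_{i,j=1}^n \lambda_i \lambda_j K(\mu_i, \mu_j)
= \sum_{i,j=1}^n \lambda_i \lambda_j F\bigl(\|h_i - h_j\|_H\bigr) \geq 0,
\end{equation*}
which is exactly the positive definiteness of $K$. There is no real obstacle here beyond the two minor checks above: that $L^2(\bar{\mu})$ qualifies as a Hilbert space in the hypothesis (trivial), and that $T_\mu^{-1}$ actually lies in it (the change-of-variables bound). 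The entire argument is a Hilbertian embedding trick, and the heavy lifting is carried by the assumption placed on $F$.
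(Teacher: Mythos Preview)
Your proof is correct and follows essentially the same mapping argument as the paper: embed each $\mu$ into the Hilbert space $L^2(\bar{\mu})$ via $\mu \mapsto T_\mu^{-1}$ and invoke the positive definiteness of $F(\|\cdot-\cdot\|_H)$ there. You simply add the (welcome) verification that $T_\mu^{-1}\in L^2(\bar{\mu})$ via the pushforward change of variables, which the paper leaves implicit.
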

\begin{proof}
We use the following classical mapping argument.
For any $\lambda_1,\ldots,\lambda_n \in \mathbb{R}$ and continuous distributions $\mu_1,\ldots \mu_n$,
\[
\sum_{i,j=1}^n \lambda_i \lambda_j K( \mu_i , \mu_j )
=
\sum_{i,j=1}^n \lambda_i \lambda_j F( \| T_{\mu_i}^{-1} - T_{\mu_j}^{-1} \|_{L^2(\bar{\mu}} )
\geq 0
\]
because $F( \| \cdot \|_{L^2(\bar{\mu}})$ is positive definite on the Hilbert space $L^2(\bar{\mu})$.
\end{proof}

In Section \ref{s:properties}, we characterise all the continuous functions $F$ that satisfy the condition in Proposition \ref{prop:embedding}. Specific examples that can readily be used in practice are provided in \eqref{eq:square:exp} to \eqref{eq:power:exp}.

\begin{Rem} \label{rem:approxi:transport}
Proposition \ref{prop:embedding} will still hold, even if $T_{\mu}^{-1}$ is not exactly the inverse of an optimal transport map. The only constraint for Proposition \ref{prop:embedding} to hold is that $T_{\mu}^{-1}$ is uniquely defined as a function of $\mu$. Hence, in practice, we can use approximated optimal transport maps, and retain the positive definiteness guarantee (see also Section \ref{s:num}).
\end{Rem}

When the distributions $\mu_1,\ldots,\mu_n$ are observed, we recommend to select their empirical barcycenter as the reference distribution $\bar{\mu}$. If these distributions are realizations from a distribution  $\mathbb{P}  \in \mathcal{W}_2(\mathcal{W}_2(\mathbb{R}^p))$, the barycenter of $\mathbb{P}$ is also a good choice of a reference distribution, from a theoretical point of view.

\section{Gaussian processes indexed on Hilbert spaces} \label{s:properties}

We consider a real Hilbert space $H$ with inner product $( \cdot ,  \cdot )_H$ and norm $\|\cdot \|_H$.
In this section, we first characterize positive definite and strictly positive definite kernels, that are radial functions on $H$, that is functions of the form $F( \| \cdot - \cdot \|_H)$.
In Propositions \ref{prop:preservation:positive:definite} and \ref{prop:preservation:strictly:positive:definite}, we show that $F( \| \cdot - \cdot \|_H)$ is a (strictly) positive definite kernel on any Hilbert space $H$, if and only if it is a (strictly) positive definite kernel when $H = \mathbb{R}^d$ for any $d \in \mathbb{N}$. Thanks to these results, 
in Proposition \ref{prop:characterization:kernels}, we revisit classical results on radial positive definite functions on $\mathbb{R}^d$ \citet{wendland2004scattered}, by showing that when $F$ is continuous, $F( \| \cdot - \cdot \|_H)$ is strictly positive definite if and only if $F(\sqrt{\cdot})$ is completely monotone if and only if $F$ is an integral of negative square exponential functions with respect to a finite measure.

Second, we show in Theorem \ref{theo:microergo:hilbert} that when $H$ is of infinite dimension, virtually all covariance parameters are microergodic when considering Gaussian processes on bounded sets.

\subsection{Characterization of radial positive definite kernels}

We consider kernels $K: H \times H \to \mathbb{R}$ of the form
\begin{equation} \label{eq:radial}
K( u , v ) = F( \| u-v \|_H ),
\end{equation}
for $u,v \in H$. We call them radial kernels. The next proposition shows that $F$ provides a positive definite kernel on any Hilbert space $H$ if and only if it does so on finite dimensional Euclidean spaces.

\begin{Prop} \label{prop:preservation:positive:definite}
Let
$F :\mathbb{R}^{+} \rightarrow \mathbb{R}$.
Then the two following statements are equivalent.
\begin{enumerate}
\item For any $d \in \mathbb{N}$, the function $K_d : \mathbb{R}^d \times \mathbb{R}^d \to \mathbb{R} $ defined by $K_d(x,y) = F(\| x-y \|)$ for $x,y \in \mathbb{R}^d$ is positive definite.
\item For any Hilbert space $H$, the function $K$ of the form \eqref{eq:radial} is positive definite. 
\end{enumerate}
\end{Prop}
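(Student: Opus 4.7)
The plan is to show the equivalence by noting that one direction is essentially by definition, and the other reduces to a finite-dimensional isometric embedding argument.

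First, the implication (2) $\Rightarrow$ (1) is immediate: for any fixed $d \in \mathbb{N}$, the space $\mathbb{R}^d$ equipped with its Euclidean inner product is itself a (finite-dimensional) Hilbert space, so the statement in (2) applied to $H = \mathbb{R}^d$ gives exactly (1).

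For the nontrivial direction (1) $\Rightarrow$ (2), I would fix a Hilbert space $H$, an integer $n \geq 1$, points $u_1, \ldots, u_n \in H$, and scalars $\lambda_1, \ldots, \lambda_n \in \mathbb{R}$, and aim to show
\[
\sum_{i,j=1}^n \lambda_i \lambda_j F(\| u_i - u_j \|_H) \geq 0.
\]
Since the quadratic form involves only the finitely many points $u_1, \ldots, u_n$, introduce the finite-dimensional subspace $V := \mathrm{span}(u_1, \ldots, u_n) \subseteq H$, which is closed and has some dimension $d \leq n$. Equipped with the inner product inherited from $H$, the space $V$ is isometrically isomorphic to $\mathbb{R}^d$: choose any orthonormal basis of $V$ and let $\Phi : V \to \mathbb{R}^d$ be the associated isometric linear map, so that $\| \Phi(u) - \Phi(u') \| = \| u - u' \|_H$ for all $u, u' \in V$.

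Setting $x_i := \Phi(u_i) \in \mathbb{R}^d$, the isometry gives $F(\|u_i - u_j\|_H) = F(\|x_i - x_j\|)$, and therefore
\[
\sum_{i,j=1}^n \lambda_i \lambda_j F(\| u_i - u_j \|_H)
= \sum_{i,j=1}^n \lambda_i \lambda_j K_d(x_i, x_j) \geq 0,
\]
where the inequality uses hypothesis (1) applied to the points $x_1, \ldots, x_n \in \mathbb{R}^d$. This establishes positive definiteness of $K$ on $H$ and completes the equivalence.

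There is no real obstacle here: the only point requiring care is the isometric identification of the finite-dimensional subspace $V$ with $\mathbb{R}^d$, which is standard Hilbert-space linear algebra. The same scheme will be reused for the strictly positive definite analogue (Proposition \ref{prop:preservation:strictly:positive:definite}), observing that the isometry $\Phi$ preserves the two-by-two distinctness of the $u_i$'s, so that the stronger conclusion transfers in the same way.
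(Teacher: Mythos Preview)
Your proof is correct and follows essentially the same approach as the paper: both arguments reduce the Hilbert-space case to a finite-dimensional Euclidean one by isometrically embedding the finite point set $\{u_1,\ldots,u_n\}$ into some $\mathbb{R}^d$. The only cosmetic difference is that the paper realizes this embedding via a factorization of the Gram matrix $[(u_i,u_j)_H]_{i,j}$ (landing in $\mathbb{R}^n$), whereas you pass through an orthonormal basis of the span (landing in $\mathbb{R}^d$ with $d\leq n$); the content is identical.
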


Next, we provide a similar characterization of the strict positive definitness property.

\begin{Prop} \label{prop:preservation:strictly:positive:definite}
Let
$F :\mathbb{R}^{+} \rightarrow \mathbb{R}$.
Then the two following statements are equivalent.
\begin{enumerate}
\item For any $d \in \mathbb{N}$, the function $K_d : \mathbb{R}^d \times \mathbb{R}^d \to \mathbb{R} $ defined by $K_d(x,y) = F(\| x-y \|)$ for $x,y \in \mathbb{R}^d$ is strictly positive definite.
\item For any Hilbert space $H$, the function $K$ of the form \eqref{eq:radial} is strictly positive definite. 
\end{enumerate}
\end{Prop}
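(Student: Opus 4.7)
The plan is to carry out a finite-dimensional reduction exactly analogous to the one used for Proposition \ref{prop:preservation:positive:definite}, taking care to verify that the reduction preserves the strictness of the positivity.

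The direction $2 \Rightarrow 1$ is immediate, since for each $d \in \mathbb{N}$ the Euclidean space $\mathbb{R}^d$ is itself a real Hilbert space, so the universal statement in 2 specializes to $H = \mathbb{R}^d$ and yields 1. All the work is in the converse direction.

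For $1 \Rightarrow 2$, fix a Hilbert space $H$, pairwise distinct points $u_1, \ldots, u_n \in H$, and real scalars $\lambda_1, \ldots, \lambda_n$ not all zero. The key step is to observe that $V := \mathrm{span}(u_1, \ldots, u_n) \subset H$ is a real inner-product space of finite dimension $d \leq n$. Choosing an orthonormal basis of $V$ produces a linear isometry $\Phi : V \to \mathbb{R}^d$. I would then set $x_i := \Phi(u_i)$. Because $\Phi$ is linear and injective, the $x_i$ remain pairwise distinct in $\mathbb{R}^d$; because $\Phi$ is an isometry, $\|x_i - x_j\| = \|u_i - u_j\|_H$ for all $i,j$. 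Applying hypothesis 1 to $K_d$ at the distinct points $x_1, \ldots, x_n$ with weights $\lambda_1, \ldots, \lambda_n$ then gives
\[
\sum_{i,j=1}^n \lambda_i \lambda_j F(\|u_i - u_j\|_H) \;=\; \sum_{i,j=1}^n \lambda_i \lambda_j F(\|x_i - x_j\|) \;=\; \sum_{i,j=1}^n \lambda_i \lambda_j K_d(x_i,x_j) \;>\; 0,
\]
which is exactly the strict positive definiteness of $K$ on $H$.

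There is no serious obstacle; the only point worth checking carefully is that the isometry $\Phi$ preserves \emph{distinctness} of the points, so that the strict inequality provided by hypothesis 1 can actually be invoked. This is precisely why the same finite-dimensional reduction that handles Proposition \ref{prop:preservation:positive:definite} also handles the strict version here. The trivial edge case $n = 1$ reduces to $\lambda_1^2 F(0) > 0$ for all $\lambda_1 \neq 0$, which is itself a consequence of hypothesis 1 applied with $d = 1$.
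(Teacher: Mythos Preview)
Your proof is correct and takes essentially the same approach as the paper: both reduce to a finite-dimensional Euclidean space by an isometric embedding that preserves all pairwise distances, then invoke hypothesis 1. The paper constructs the embedding via the spectral factorization of the Gram matrix $\tilde{C} = ((u_i,u_j)_H)_{i,j} = PDP'$, sending $u_i \mapsto e_i P D^{1/2} \in \mathbb{R}^n$, whereas you do it more geometrically by choosing an orthonormal basis of $\mathrm{span}(u_1,\ldots,u_n)$; the two constructions are equivalent, and your explicit check that distinctness is preserved makes the strict case slightly more transparent than in the paper's treatment.
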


In the case where $F$ is continuous, we can use the existing work on radial kernels on $\mathbb{R}^d$ (see e.g. \citet{wendland2004scattered}) to further characterize the functions $F$ providing strictly positive definite kernels in \eqref{eq:radial}. 
In this view, we call a function $f : [0,\infty) \to \mathbb{R}$ completely monotone if it is $C^{\infty}$ on $(0,\infty)$, continuous at $0$ and satisfies $(-1)^\ell f^{(\ell)}(r) \geq 0$ for $r > 0$.

\begin{Prop} \label{prop:characterization:kernels}
Let
$F :\mathbb{R}^{+} \rightarrow \mathbb{R}$.
Then the following statements are equivalent.
\begin{enumerate}
\item For any Hilbert space $H$, the function $K: H \times H \to \mathbb{R}$ of the form \eqref{eq:radial}, defined by $K( u , v ) = F( \| u-v \|_H )$, is strictly positive definite. 
\item $F(\sqrt{.})$ is completely monotone on $[0,\infty$ and not constant.
\item There exists a finite nonnegative Borel measure $\nu$ on $[0,\infty)$ that is not concentrated at zero, such that
\[
F(t) = \int_{\mathbb{R}} e^{- u t^2} \nu(du).
\]
\end{enumerate}
\end{Prop}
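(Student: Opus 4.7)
The plan is to chain three reductions: first from Hilbert spaces to Euclidean spaces of all dimensions, then from the Euclidean characterization to complete monotonicity, and finally from complete monotonicity to the integral representation via Bernstein's theorem.

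First, I would invoke Proposition \ref{prop:preservation:strictly:positive:definite} to reduce statement (1) to the seemingly weaker statement that $F(\|\cdot-\cdot\|)$ is strictly positive definite on $\mathbb{R}^d$ for every $d \in \mathbb{N}$. This turns the problem into a classical question about radial strictly positive definite functions on Euclidean spaces of arbitrary dimension.

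Second, I would appeal to the classical Schoenberg-type characterization as presented in \citet{wendland2004scattered} (see Theorem 7.14 there and the accompanying discussion for strict positive definiteness). This says that $F(\|x-y\|)$ is positive definite on $\mathbb{R}^d$ for every $d$ if and only if $F(\sqrt{\cdot})$ is completely monotone on $[0,\infty)$, and is strictly positive definite on every $\mathbb{R}^d$ if and only if, additionally, $F(\sqrt{\cdot})$ is not constant. Combined with the first step, this yields (1) $\iff$ (2). The only care needed here is to verify that ``not constant'' is indeed the sharp extra condition for strictness: if $F(\sqrt{\cdot})$ is constant then $F \equiv c$ for some $c\ge 0$, in which case $K$ has rank one and $(\lambda_1,\lambda_2)=(1,-1)$ at any two distinct points defeats strict positivity; conversely, if $F(\sqrt{\cdot})$ is completely monotone and nonconstant, then the representing Bernstein measure has mass outside $\{0\}$, and the strict positivity of Gaussian kernels then transfers to $F$.

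Third, I would derive (2) $\iff$ (3) from Bernstein's theorem on completely monotone functions: $g:[0,\infty)\to\mathbb{R}$ is completely monotone if and only if there exists a finite nonnegative Borel measure $\nu$ on $[0,\infty)$ with $g(s) = \int_{[0,\infty)} e^{-us}\,\nu(du)$. Applying this to $g(s) := F(\sqrt{s})$ and changing variables $s = t^2$ yields the representation $F(t) = \int_{[0,\infty)} e^{-ut^2}\,\nu(du)$. Under this correspondence, $F(\sqrt{\cdot})$ is constant if and only if $\nu$ is concentrated at $0$, so the ``not constant'' and ``not concentrated at zero'' clauses match up.

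The main obstacle is the sharpness of the strict-positivity step: Schoenberg's original theorem addresses (non-strict) positive definiteness, and one must carefully upgrade to the strict version. The forward direction is delicate because strict positive definiteness must hold on every finite-dimensional $\mathbb{R}^d$ simultaneously. I would handle this by noting that if $\nu$ has mass on $(0,\infty)$, then the integrand $e^{-u\|x-y\|^2}$ is a strictly positive definite Gaussian kernel on $\mathbb{R}^d$ for each $u>0$, and integrating against $\nu$ preserves strict positive definiteness as long as $\nu((0,\infty))>0$; for the converse, one uses that a constant function fails to be strictly positive definite already on $\mathbb{R}$.
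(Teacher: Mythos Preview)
Your proposal is correct and follows essentially the same approach as the paper: reduce from Hilbert spaces to all Euclidean $\mathbb{R}^d$ via Proposition~\ref{prop:preservation:strictly:positive:definite}, then invoke Theorem~7.14 of \citet{wendland2004scattered} (which packages Schoenberg's and Bernstein's theorems together with the strict positive definiteness upgrade). The paper's proof is the two-line version of exactly what you wrote; your elaboration of the ``not constant'' / ``not concentrated at zero'' correspondence and of the strictness step is just unpacking what is inside the cited theorem.
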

\begin{proof}
The proposition is a direct consequence of Proposition \ref{prop:preservation:strictly:positive:definite} and Theorem 7.14 in \citet{wendland2004scattered}. We remark that the statement 2. corresponds to a theorem from Schoenberg \citet{schoenberg1938metric}.
\end{proof}

From the previous proposition, it follows that the following choices of $F$ can be used in \eqref{eq:radial} to provide strictly positive definite covariance functions on $H$. The square exponential covariance function is given by
\begin{equation} \label{eq:square:exp}
F_{\sigma^2 , \ell}(t) = \sigma^2 e^{- (t/\ell)^2},
\end{equation}
with $\sigma^2,\ell \in (0,\infty)$.
The Mat\'ern covariance function is given by 
\begin{equation} \label{eq:matern}
F_{\sigma,\alpha,\nu}(t) = \frac{\sigma^2 ( \alpha t )^\nu }{2^{\nu -1} \Gamma(\nu)}
K_{\nu}( \alpha t )
\end{equation}
where $\Gamma$ is the Gamma function and $K_{\nu}$ is the modified Bessel function of the second kind \citet{stein99interpolation,loh2015estimating}.  
Finally, the power exponential function
\begin{equation} \label{eq:power:exp}
F_{\sigma^2,\ell,s}( t ) = \sigma^2 \exp( - (t/\ell)^s)
\end{equation}
satisfies the condition of Proposition \ref{prop:characterization:kernels} (see e.g. \citet{myieee}). 

Let us remark that, of course, not all positive definite kernels on $H$ are radial functions of the form \eqref{eq:radial}. For instance, the function $(\cdot , \cdot)_H$ is positive definite and is called a linear kernel.

One can also remark that, while Mercer's theorem has become classic for continuous positive definite kernels on compact sets of $\mathbb{R}^d$ \citet{wendland2004scattered}, a similar construction has not been shown to exist on bounded subsets of Hilbert spaces in infinite dimension. This can be considered as a structural difficulty when tackling Gaussian processes on infinite dimensional Hilbert spaces. On the other hand, we now show that infinite dimensional Hilbert spaces provide more space, so to speak, that enable to distinguish between distinct covariance functions in a more stringent way. More precisely, we show next that, when considering parametric sets of covariance functions, virtually all the covariance parameters are microergodic.

\subsection{Microergodicity results}

Let $ H$ be a Hilbert space.
Consider a set of functions $\{ F_{\theta} ; \theta \in \Theta \}$, with $F_{\theta} : \mathbb{R}^+ \to \mathbb{R}$ for $\theta \in \Theta$ and with $\Theta \subset \mathbb{R}^q$.  To $F_{\theta}$ we associate the covariance function $K_{\theta} = F_{\theta}(\| \cdot - \cdot \|)$ on $H$.

Let $h_0 \in  H$ and $0 < L < \infty$ be fixed and let $\bar{\mathcal{B}}_{2,L} = \{ h \in  H ; || h - h_0 ||_{ H} \leq L \}$. 
Let $ \bar{F} = \mathbb{R}^{ \bar{\mathcal{B}}_{2,L}}$ be the set of functions from $\bar{\mathcal{B}}_{2,L}$ to $\mathbb{R}$. Let $\mathcal{F}$ be the cylinder sigma algebra on $\bar{F}$ generated by the functions $f \to (f(h_1),...,f(h_r))$ for any $r \in \mathbb{N}$ and $h_1,...,h_r \in  H$. For any $\theta \in \Theta$, let $\mathbb{P}_{\theta}$ be the measure on $(\bar{F} , \mathcal{F})$ equal to the law of a Gaussian process on $\bar{\mathcal{B}}_{2,L}$ with mean function zero and covariance function $ (h_1,h_2) \to K_{\theta}( || h_1 - h_2 ||_{H} )$. Then, following \citet{stein99interpolation}, we say that the covariance parameter $\theta$ is microergodic if, for any $\theta_1, \theta_2 \in \Theta $ with $\theta_1 \neq \theta_2$, the measures $\mathbb{P}_{\theta_1}$ and $\mathbb{P}_{\theta_2}$ are orthogonal, that is there exists $\mathcal{A} \in \mathcal{F} $ so that $\mathbb{P}_{\theta_1}(A) = 1$ and $\mathbb{P}_{\theta_2}(A) = 0$. 

In the most classical case where $H = \mathbb{R}^d$, microergodicity is an important concept. Indeed, it is a necessary condition for consistent estimators of $\theta$ to exist under fixed-domain asymptotics \citet{stein99interpolation}, and a fair amount of work has been devoted to showing microergodicity or non-microergodicity of parameters, for various models of covariance functions \citet{stein99interpolation,zhang04inconsistent,And2010}. Typically, when $H = \mathbb{R}^d$ there are several standard sets of functions $\{ F_{\theta} ; \theta \in \Theta \}$ for which $\theta$ is not microergodic. A classical example is the set $\{ F_{\sigma^2 , \ell , \nu} \}$ of the form \eqref{eq:matern} \citet{zhang04inconsistent}.

In contrast, we now show that, under very mild assumptions, all covariance parameters $\theta$ are microergodic when $H$ has infinite dimension.

\begin{theorem}  \label{theo:microergo:hilbert}
Assume that $H$ has infinite dimension.
Assume that there does not exist $\theta_1 , \theta_2 \in \Theta$, with $\theta_1 \neq \theta_2$, so that $t \to F_{\theta_1}(t) - F_{\theta_2}(t)$ is constant on $[0,2L]$. Then the covariance parameter $\theta$ is microergodic. 
\end{theorem}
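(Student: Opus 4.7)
The plan is to exploit the infinite dimension of $H$ to construct countably many functionals of the process that are exactly i.i.d.\ Gaussian under each $\mathbb{P}_\theta$, then apply a strong law of large numbers to almost surely recover $F_\theta$ at two well-chosen distances. Concretely, for fixed $\theta_1 \neq \theta_2$ the hypothesis supplies $t_1, t_2 \in [0,2L]$ with $F_{\theta_1}(t_1) - F_{\theta_1}(t_2) \neq F_{\theta_2}(t_1) - F_{\theta_2}(t_2)$, and this numerical discrepancy will be the source of mutual singularity.

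Using the infinite dimensionality, I will pick two disjoint countable orthonormal families $(e_n)_{n\ge 1}$ and $(f_n)_{n\ge 1}$ in $H$ and form the antipodal pairs
\[
P^{(1)}_n = h_0 + \tfrac{t_1}{2} e_n, \qquad Q^{(1)}_n = h_0 - \tfrac{t_1}{2} e_n,
\]
\[
P^{(2)}_n = h_0 + \tfrac{t_2}{2} f_n, \qquad Q^{(2)}_n = h_0 - \tfrac{t_2}{2} f_n,
\]
all of which sit in $\bar{\mathcal{B}}_{2,L}$ and satisfy $\| P^{(k)}_n - Q^{(k)}_n \|_H = t_k$. Setting $V^{(k)}_n := Z(P^{(k)}_n) - Z(Q^{(k)}_n)$, a short computation will show that under every $\mathbb{P}_\theta$ each $V^{(k)}_n$ is centred Gaussian with variance $2[F_\theta(0) - F_\theta(t_k)]$, and that every cross-covariance $\Cov_\theta(V^{(k)}_n, V^{(k')}_m)$ with $(k,n) \neq (k',m)$ vanishes: by orthogonality of the selected basis vectors, the four distances appearing in the bilinear expansion
\[
\Cov_\theta(V^{(k)}_n, V^{(k')}_m) = F_\theta(\|P^{(k)}_n - P^{(k')}_m\|) - F_\theta(\|P^{(k)}_n - Q^{(k')}_m\|) - F_\theta(\|Q^{(k)}_n - P^{(k')}_m\|) + F_\theta(\|Q^{(k)}_n - Q^{(k')}_m\|)
\]
are all equal and cancel in pairs. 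This covariance collapse, uniform across the two families and independent of $\theta$, is the crux of the argument.

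Once the $V^{(k)}_n$ are known to be i.i.d.\ Gaussian under each $\mathbb{P}_\theta$, the classical SLLN yields, $\mathbb{P}_\theta$-almost surely,
\[
L := \lim_{N \to \infty} \frac{1}{N} \sum_{n=1}^N \bigl[ (V^{(1)}_n)^2 - (V^{(2)}_n)^2 \bigr] = 2\bigl[F_\theta(t_2) - F_\theta(t_1)\bigr].
\]
Taking $A$ to be the event that $L$ exists and equals $2[F_{\theta_1}(t_2) - F_{\theta_1}(t_1)]$ gives $\mathbb{P}_{\theta_1}(A) = 1$ and, by the choice of $t_1, t_2$, $\mathbb{P}_{\theta_2}(A) = 0$. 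Since $A$ is defined through countably many evaluations of the process at the fixed points $P^{(k)}_n, Q^{(k)}_n$, it belongs to the cylinder $\sigma$-algebra $\mathcal{F}$, which proves microergodicity. The main obstacle I anticipate is verifying that every cross-covariance really vanishes simultaneously across indices in the two different families, but the $\pm$-symmetry built into the antipodal construction, combined with the mutual orthogonality of $(e_n) \cup (f_n)$, is precisely what makes this collapse work uniformly.
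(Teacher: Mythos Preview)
Your proof is correct and rests on the same core idea as the paper's: exploit an infinite orthonormal family in $H$ to place antipodal pairs whose pairwise distances are all equal by the Pythagorean theorem, so that the resulting Gaussian increments are independent, and then recover a value of the variogram $t\mapsto F_\theta(0)-F_\theta(t)$ (or a difference thereof) by a law of large numbers.

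The execution differs in two ways. First, the paper observes that one may always take $t_1=0$: if $F_{\theta_1}-F_{\theta_2}$ is non-constant on $[0,2L]$ then it is not identically equal to its value at $0$, so some $t^*\in[0,L]$ gives $F_{\theta_1}(0)-F_{\theta_1}(2t^*)\neq F_{\theta_2}(0)-F_{\theta_2}(2t^*)$. This lets the paper work with a \emph{single} orthonormal family $(e_i)$, whereas you carry two disjoint orthonormal families $(e_n),(f_n)$ to handle general $t_1,t_2$. Your extra generality costs nothing but is not needed. Second, the paper forms the full $2n$-vector $(Y(\pm t^*e_i))_i$, whose covariance is block-diagonal plus a rank-one constant matrix; it then removes the common component by centering, obtains convergence in probability of the empirical $2\times2$ covariance, and extracts an a.s.\ subsequence. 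Your antipodal differences $V_n^{(k)}$ kill the constant block outright (the $\pm$ symmetry makes all four cross-distances coincide), yielding exactly independent Gaussians, so the SLLN gives almost-sure convergence directly with no centering or subsequence step. In this respect your argument is the cleaner of the two. One small wording point: the $V_n^{(k)}$ are not i.i.d.\ across $k$ (the variances differ), but the sequence $W_n:=(V_n^{(1)})^2-(V_n^{(2)})^2$ that you actually average is genuinely i.i.d.\ with finite mean, which is all the SLLN needs.
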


In Theorem \ref{theo:microergo:hilbert}, the condition on the parametric family $\{ F_{\theta} ; \theta \in \Theta \}$ holds for all the commonly used families of functions $F_{\theta}$ that are used to construct covariance functions on $\mathbb{R}^d$ as in Proposition \ref{prop:preservation:strictly:positive:definite}.
These commonly used families include notably the Mat\'ern covariance functions and the power exponential covariance functions that are introduced above. They also include
the generalized Wendland covariance functions and the spherical covariance functions  \citet{bevilacqua2016estimation,abrahamsen97review}. 

Hence, Theorem \ref{theo:microergo:hilbert} shows that it is possible that consistent estimators exist for $\theta$, in many parametric models of covariance functions of the form \eqref{eq:radial}, for infinite dimensional Hilbert spaces. 

Finally, one can see that if $\theta$ is microergodic when $H = \mathbb{R}^{d_2}$, then it is also microergodic when $H = \mathbb{R}^{d_1}$ with $d_1 \leq d_2$. That is, an higher dimension of the input space yields more microergodicity.
In agreement with this fact, Theorem \ref{theo:microergo:hilbert} can be interpreted as follows: when $d$ is infinite, the covariance parameter $\theta$ is always microergodic for Gaussian processes on $\mathbb{R}^d$.

\section{Statistical properties of our suggested positive definite kernels on distributions} \label{s:comput}

\subsection{General consistency properties}

Here, we consider the case where $n$ i.i.d. random  continuous distributions $\mu_1 , \ldots , \mu_n$ are observed, from a distribution $\mathbb{P}  \in \mathcal{W}_2(\mathcal{W}_2(\mathbb{R}^p))$. Hence, two possible reference distributions for our suggested construction of Proposition \ref{prop:embedding} are the empirical barycenter $\bar{\mu}_n$ of $\mu_1 , \ldots , \mu_n$ and the barycenter $\bar{\mu}$ of $\mathbb{P}$. We now show that these two reference points will asymptotically give the same kernel when $n$ is large.

For $\mu \in \mathcal{W}_2( \mathbb{R}^p )$, let $T_{\mu} , T_{\mu,n} : \mathbb{R}^p \to \mathbb{R}^p$ be the optimal transportation maps defined by 
\[
{T_{\mu}}_\sharp {\mu} = \bar{\mu}
~ ~
,
~ ~
{T_{\mu ,n}}_\sharp {\mu} = \bar{\mu}_n
\]
and
\[
|| \mathrm{id}  - T_\mu ||_{ L^2( \mu ) } = W_2( \mu , \bar{\mu} )
~ ~
,
~ ~
|| \mathrm{id}  - T_{\mu , n} ||_{ L^2( \mu ) } = W_2( \mu , \bar{\mu}_n ).
\]
Let also, for $i=1,...,n$ $T_i = T_{\mu_i}$ and $T_{i,n} = T_{\mu_i,n}$.

We remark that, because of the assumption on $\mathbb{P}$, both the barycenter and the empirical barycenter are absolutely continuous w.r.t Lebesgue measure on $\mathbb{R}^p$. Hence, $T_1,...,T_n$ and $T_{1,n},...,T_{n,n}$ are uniquely defined. For $F : \mathbb{R}^+ \to \mathbb{R}$, we let
\begin{equation} \label{eq:Kn}
K_{n}(\mu,\nu) = F (\|T_{\mu,n}^{-1}-T_{\nu,n}^{-1}\|^2_{L^2(\bar{\mu}_n)})
\end{equation}
be the  empirical kernel and
\begin{equation} \label{eq:K}
K(\mu,\nu) = F (\|T_{\mu}^{-1}-T_{\nu}^{-1}\|^2_{L^2(\bar{\mu})})
\end{equation}
be the theoretical kernel.
We now prove that the empirical kernel $K_n$ provides a good approximation of the kernel $K$. 
We will use the consistency property of Theorem~\ref{th:empiricconsisten}, stating that the empirical barycenter is a consistent estimate for $\bar{\mu}$. 

\begin{Prop}[Consistency of Kernel] 
\label{prop:approximation}
Let $F$ in \eqref{eq:Kn} and \eqref{eq:K} be continuous. 
The empirical kernel is a good approximation of the true covariance kernel in the sense that, for any two fixed absolutely continuous measures $\mu$ and $\nu$ in $\mathcal{W}_2(\mathbb{R}^p)$, we have 
$$
K_{n}(\mu,\nu) \rightarrow K(\mu,\nu)
$$
a.s. when $n$ goes to infinity.
\end{Prop}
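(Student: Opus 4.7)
The plan is to reduce the convergence of $K_n$ to convergence of the underlying $L^2$-integrals by continuity of $F$, and then to prove the latter by exploiting the stability of optimal transport under perturbation of the source measure. By Theorem~\ref{th:empiricconsisten}, $W_2(\bar{\mu}_n,\bar{\mu})\to 0$ almost surely; I restrict to this a.s.\ event. Since $\bar{\mu}$ is absolutely continuous, Brenier's theorem gives a unique optimal transport map $U_n$ from $\bar{\mu}$ to $\bar{\mu}_n$ with
\[
\|U_n-\mathrm{id}\|_{L^2(\bar{\mu})}=W_2(\bar{\mu},\bar{\mu}_n)\to 0.
\]
I set $S_n^\mu:=T_{\mu,n}^{-1}\circ U_n$ and $S_n^\nu:=T_{\nu,n}^{-1}\circ U_n$, which push $\bar{\mu}$ forward to $\mu$ and $\nu$ respectively, so by push-forward
\[
I_n:=\|T_{\mu,n}^{-1}-T_{\nu,n}^{-1}\|_{L^2(\bar{\mu}_n)}^2=\|S_n^\mu-S_n^\nu\|_{L^2(\bar{\mu})}^2.
\]
Continuity of $F$ and a triangle inequality in $L^2(\bar{\mu})$ reduce the statement to showing $S_n^\mu\to T_\mu^{-1}$ and $S_n^\nu\to T_\nu^{-1}$ in $L^2(\bar{\mu})$.

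I treat only $S_n^\mu$; the argument for $\nu$ is identical. Since $S_n^\mu$ transports $\bar{\mu}$ to $\mu$, the lower bound $\|S_n^\mu-\mathrm{id}\|_{L^2(\bar{\mu})}\geq W_2(\bar{\mu},\mu)$ is immediate. Conversely, the triangle inequality and a change of variables under $U_n$ yield
\[
\|S_n^\mu-\mathrm{id}\|_{L^2(\bar{\mu})}\leq\|T_{\mu,n}^{-1}\circ U_n-U_n\|_{L^2(\bar{\mu})}+\|U_n-\mathrm{id}\|_{L^2(\bar{\mu})}=W_2(\bar{\mu}_n,\mu)+W_2(\bar{\mu}_n,\bar{\mu}),
\]
and this tends to $W_2(\bar{\mu},\mu)$ since $|W_2(\bar{\mu}_n,\mu)-W_2(\bar{\mu},\mu)|\leq W_2(\bar{\mu}_n,\bar{\mu})\to 0$. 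Hence $S_n^\mu$ is an asymptotically optimal transport from the absolutely continuous measure $\bar{\mu}$ to the fixed target $\mu$.

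The closing step is the stability of optimal transport. The plans $\gamma_n^\mu:=(\mathrm{id},S_n^\mu)_\sharp\bar{\mu}$ lie in $\Pi(\bar{\mu},\mu)$, which is tight; any narrow subsequential limit is optimal by lower semicontinuity of the quadratic cost, and absolute continuity of $\bar{\mu}$ forces this limit to be the unique Brenier plan $(\mathrm{id},T_\mu^{-1})_\sharp\bar{\mu}$, so the whole sequence converges narrowly. The total second moment of $\gamma_n^\mu$ equals $\int|x|^2d\bar{\mu}+\int|y|^2d\mu$, which is constant in $n$ and matches that of the limit, so narrow convergence upgrades to $W_2$-convergence on $\mathbb{R}^p\times\mathbb{R}^p$; this in turn gives $S_n^\mu\to T_\mu^{-1}$ in $\bar{\mu}$-measure, and uniform integrability of $|S_n^\mu|^2$ is automatic from $(S_n^\mu)_\sharp\bar{\mu}=\mu$, so Vitali's theorem concludes $S_n^\mu\to T_\mu^{-1}$ in $L^2(\bar{\mu})$. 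The main obstacle lies in this last stability argument: the passage from asymptotic optimality of the transport cost to $L^2$ convergence of the maps themselves relies crucially on absolute continuity of $\bar{\mu}$ (to pin down the limit uniquely) and on the fixed target $\mu$ (which supplies the uniform integrability).
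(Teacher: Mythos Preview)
Your proof is correct and shares the paper's overall architecture: reduce via continuity of $F$, change variables through the optimal map $U_n$ from $\bar{\mu}$ to $\bar{\mu}_n$ (the paper's $R_n$), and then show $S_n^\mu:=T_{\mu,n}^{-1}\circ U_n\to T_\mu^{-1}$ in $L^2(\bar{\mu})$ by a triangle-inequality reduction. Where you diverge is in the justification of this last $L^2$ convergence. The paper routes it through an auxiliary lemma on the \emph{forward} maps, showing $\|T_{\nu,n}-T_\nu\|_{L^2(\nu)}\to 0$ via a short contradiction argument (implicitly invoking Brenier uniqueness), and then transfers this to the inverse maps composed with $R_n$. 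You instead stay on the inverse side throughout: you show that $S_n^\mu$ is an asymptotically optimal transport from the fixed absolutely continuous source $\bar{\mu}$ to the fixed target $\mu$, pass to transport plans, use tightness and lower semicontinuity to identify the unique Brenier limit, and close with Vitali thanks to the fixed target marginal. Your route is arguably cleaner, since it works directly with the objects appearing in the kernel and makes the role of Brenier uniqueness and uniform integrability explicit; it also sidesteps the passage from forward-map convergence to inverse-map convergence. Both arguments rest on the same structural fact: absolute continuity of $\bar{\mu}$ pins down a unique optimal plan, so asymptotic optimality of the cost forces $L^2$ convergence of the maps. One small remark: the step ``$W_2$-convergence of plans gives $S_n^\mu\to T_\mu^{-1}$ in $\bar{\mu}$-measure'' is correct but not entirely immediate; a one-line Lusin argument (approximate $T_\mu^{-1}$ by a continuous map on a large compact set and test against $(x,y)\mapsto \min(|y-\tilde T(x)|,1)$) would make it airtight.
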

\begin{proof}
Using the continuity of the function $F$, it is enough to show that a.s. 
\[
\|T_{\mu,n}^{-1}-T_{\nu,n}^{-1}\|^2_{L^2(\mu_n)} 
-\|T_{\mu}^{-1}-T_{\nu}^{-1}\|^2_{L^2(\bar{\mu})} 
\longrightarrow 0.
\]
Lemma~\ref{lem:true}, whose proof is presented in the Appendix, leads to the result.
\end{proof}

In the next Corollary, we show that the consistency result in Proposition \ref{prop:approximation} implies that the conditional means and variances based on the empirical kernel asymptotically coincide with those based on the true kernel.

\begin{corollary} \label{cor:consistency:inference}
Let $N \in \mathbb{N}$ and let $\mu_1,\ldots,\mu_N,\mu$ be fixed absolutely continuous measures in $\mathcal{W}_2(\mathbb{R}^p)$. Let $y= (y_1,\ldots,y_N)^\top$ be fixed in $\mathbb{R}^N$. Set $R = [K(\mu_i , \mu_j)]_{1 \leq i,j \leq N}$ and assume that $R$ is invertible. Let $Y = \{Y_{\mu}\}$ be a Gaussian process with zero mean function and covariance function given by \eqref{eq:K}. Then $$\mathbb{E}( Y_\mu | Y_{\mu_1} = y_1 , \ldots ,  Y_{\mu_N} = y_N ) = r_{\mu}^\top R^{-1} y$$ with $r_{\mu} = ( K(\mu,\mu_1),\ldots,\linebreak[1] K(\mu,\mu_N) )^\top$.  Let $$\mathbb{E}_n( Y_\mu | Y_{\mu_1} , \ldots ,  Y_{\mu_N} ) = r_{\mu,n}^\top R_n^{-1} y$$ with $r_{\mu,n} = ( K_n(\mu,\mu_1),\ldots, K_n(\mu,\mu_N) )^\top$ and $R_n = [K_n(\mu_i , \mu_j)]_{1 \leq i,j \leq N}$. Also $$\mathrm{Var}( Y_\mu | Y_{\mu_1} = y_1, \ldots ,\linebreak[1]  Y_{\mu_N} = y_N) = K(\mu,\mu) -  r_{\mu}^\top R^{-1} r_{\mu}$$ and we let $$\mathrm{Var}_n( Y_\mu | Y_{\mu_1} , \ldots ,  Y_{\mu_N} ) = K_n(\mu,\mu) -  r_{\mu,n}^\top R_n^{-1} r_{\mu,n}.$$
\\
Then, a.s. as $n \to \infty$, \begin{align*} \mathbb{E}_n( Y_\mu | Y_{\mu_1} , \ldots ,  Y_{\mu_N} ) & \to \mathbb{E}( Y_\mu | Y_{\mu_1} , \ldots ,  Y_{\mu_N})  \\  {\rm and} \:  \mathrm{Var}_n( Y_\mu | Y_{\mu_1} , \ldots ,  Y_{\mu_N} ) & \to \mathrm{Var}( Y_\mu | Y_{\mu_1} , \ldots ,  Y_{\mu_N} ).\end{align*}
\end{corollary}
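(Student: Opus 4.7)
The plan is to reduce the statement to the componentwise convergence already given by Proposition \ref{prop:approximation}, and then invoke continuity of the algebraic operations that turn a finite kernel matrix into a conditional mean and variance. Concretely, the quantities $\mathbb{E}_n(Y_\mu\mid\cdots)$ and $\mathrm{Var}_n(Y_\mu\mid\cdots)$ are simple continuous (in fact rational) functions of the finite collection of numbers $\{K_n(\mu_i,\mu_j)\}_{1\le i,j\le N}$, $\{K_n(\mu,\mu_i)\}_{1\le i\le N}$ and $K_n(\mu,\mu)$, and likewise for their non-subscripted counterparts.

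First I would apply Proposition \ref{prop:approximation} to each of the finitely many pairs $(\mu_i,\mu_j)$, $(\mu,\mu_i)$ and $(\mu,\mu)$ appearing in the expressions. Taking the intersection of the corresponding almost-sure events (still an almost-sure event, since there are only finitely many of them) yields, a.s.\ as $n\to\infty$,
\begin{equation*}
R_n \longrightarrow R, \qquad r_{\mu,n} \longrightarrow r_{\mu}, \qquad K_n(\mu,\mu) \longrightarrow K(\mu,\mu),
\end{equation*}
entrywise in $\mathbb{R}^{N\times N}$ and $\mathbb{R}^{N}$ respectively.

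Next I would handle the invertibility issue, which is the only mildly delicate point. By assumption $R$ is invertible, so $\det(R)\neq 0$; since the determinant is a continuous (polynomial) function of matrix entries and $R_n\to R$ a.s., we have $\det(R_n)\to\det(R)\neq 0$ a.s., and therefore $R_n$ is invertible for all sufficiently large $n$ on an almost-sure event. On this event, matrix inversion is continuous on the open set of invertible matrices, so $R_n^{-1}\to R^{-1}$ a.s.

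Finally, on the same almost-sure event, the continuous-function identities
\begin{equation*}
r_{\mu,n}^\top R_n^{-1} y \longrightarrow r_{\mu}^\top R^{-1} y
\end{equation*}
and
\begin{equation*}
K_n(\mu,\mu) - r_{\mu,n}^\top R_n^{-1} r_{\mu,n} \longrightarrow K(\mu,\mu) - r_{\mu}^\top R^{-1} r_{\mu}
\end{equation*}
follow by elementary continuity of matrix-vector products and inner products. These are precisely the two claimed convergences for $\mathbb{E}_n(Y_\mu\mid\cdots)$ and $\mathrm{Var}_n(Y_\mu\mid\cdots)$. There is no real obstacle beyond bookkeeping the exceptional nullsets and verifying that $R_n$ is eventually invertible; no regularity beyond continuity of $F$ is needed, since Proposition \ref{prop:approximation} already supplies it.
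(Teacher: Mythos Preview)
Your proposal is correct and follows the same approach as the paper, which simply notes that the corollary is a direct consequence of $N$ being fixed as $n\to\infty$ and $R$ being invertible. You have merely made explicit the continuity and bookkeeping arguments that the paper leaves implicit.
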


\begin{proof}
The Corollary is a direct consequence of the facts that $N$ is fixed as $n \to \infty$ and that $R$ is invertible.
\end{proof}


\subsection{Universality}

Note that when considering a kernel $K$, a natural property to be studied would be its universality. Actually, a kernel is said to be universal on $\Omega \subset \mathcal{W}(\mathbb{R}^p)$ as soon as the space generated by its linear combinations $\mu \in \mathcal{W}(\Omega) \mapsto \sum_{i=1}^n \alpha_i K(\mu,\mu_i) \in  \mathbb{R}$ can generate all continuous functions on $\mathcal{W}(\Omega)$. The general form \eqref{eq:K} of the kernel may provide uniform kernels under regularity assumptions on the transportation maps $T_i$. More precisely injectivity and continuity are required as pointed out in \citet{micchelli2006universal} to get a universal kernel. In some particular cases, it is possible to obtain such results. In the case of Gaussian distributions, the transport map is linear and thus it entails the universality of the kernel in this case. In \citet{2018arXiv180601238D}, Proposition 1.4.1 derived Theorem 1.1 from \citet{2018arXiv180504946F} provides some conditions for continuity of the transportation maps but regularity of transportation maps in general dimensions is a difficult issue. It  has received a lot of attention in the last years see  for instance to~\citet{santambrogio2015optimal} and such conditions can not be guaranteed in a very general framework but could only be studied for very particular class of distributions, leading to too restrictive cases, which are not at the heart of this paper. 

\subsection{Specific properties for Gaussian distributions}

In some special cases, the optimal transportation maps can be written down explicitly. Unfortunately, this holds only for some particular class of admissible transformations. 
An example of explicit calculations is given by 
a  family of  Gaussian distribution.
Let \(\mathcal{F} = \left\{\mathcal{N}(0, S)\right\}_{S} \)
be a family of centred Gaussian distributions. 
Further we assume the covariance matrices to be random:
\(S \overset{iid}{\sim} \mathbb{P} \).
This setting is equivalent to the definition of
some distribution \(\mathbb{P}\) over \(\mathcal{F}\).
We denote as $\mub = \mathcal{N}(0, \Sb)$
the unique population barycenter of \(\mathbb{P}\).

Let \(\{\mu_i\}_{i = {1,\dots,n}}\) be a family of observed 
random Gaussian distributions with  zero mean and
non-degenerated covariance \( S_i\): \(\mu_i = \mathcal{N}(0, S_i) \),
\(S_i \sim \mathbb{P} \).
An empirical barycenter is recovered uniquely: \(\mun = \mathcal{N}(0, \Sb_n) \)
with \(\Sb_n\) a solution of the following fixed-point equation 
\(\Sb_n = \frac{1}{n}\sum\bigl(S^{1/2}_i\Sb_n S^{1/2}_i \bigr)^{1/2} \). This result is well known and has been described in many papers, see for instance in the seminal work \citet{agueh2011barycenters}. The solution can be obtained by an iterative method, presented in \citet{ALVAREZESTEBAN2016744}.

The Gaussian setting allows to write down an optimal transport plan \(T_i \) between 
$\mu_i$ and the population barycenter $\mub = \mathcal{N}(0,\bar{S})$ and its inverse explicitly:
\[
\label{eq:GauOT}
T_i =  S^{-1/2}_i\bigl(S^{1/2}_i \Sb S^{1/2}_i \bigr)^{1/2}S^{-1/2}_i,
\quad
T_i^{-1} = {\Sb}^{-1/2}\bigl({\Sb}^{1/2} S_i {\Sb}^{1/2} \bigr)^{1/2} {\Sb}^{-1/2}.
\]
In this case, we can compute the distance between the transport plans in $L^2(\bar{\mu})$ using the expression in \eqref{eq:GauOT}
$ \|T_i^{-1}-T_j^{-1}\|^2_{L^2(\bar{\mu})}$, as the distance is the variance of a linear transform of Gaussian random variable:
\begin{equation}
\label{eq:OT_map_norm}
\|T_i^{-1}-T_j^{-1}\|^2_{L^2(\bar{\mu})} 
= \Bigl\|\Sb^{-1/2} \Bigl[ \bigl({\Sb}^{1/2} S_i {\Sb}^{1/2} \bigr)^{1/2} 
- \bigl({\Sb}^{1/2} S_j {\Sb}^{1/2} \bigr)^{1/2} \Bigr]^{1/2} \Bigr\|^2_{F}.
\end{equation}
The same expression holds for \(\bigl\|T_{i, n}^{-1}-T_{j, n}^{-1} \bigr\|^2_{L^2(\bar{\mu}_{{n}})} \), replacing the barycenter by its empirical counterpart. We can see that in this case the kernel amounts to compute a natural distance between the two distributions $\mu_i$ and $\mu_j$ obtained by the scale deformation $S_i^{1/2} X$ and $S_i^{1/2} X$ of a Gaussian random variable $X\sim \mathcal{N}(0, Id)$. This distance is then used through any kernel which provides some insights on a proper notion of covariance between processes indexed by these two distributions.

\indent We point out that in the Gaussian case, the rate of convergence of the covariance estimates can be made precise.
\begin{Prop} \label{p:rate}
	Let  \(\mathcal{F}\) be s.t. 
	\(\mathbb{E}_{S\sim \mathbb{P}}\text{tr}(S) \leq +\infty\) and
	let $M_n$ and $M$ be respectively the empirical and true $N \times N $ covariance matrices 
	of a Gaussian process constructed from the kernels \(K_n\) and \(K\) using a grid $\mathcal{N}(0, S_1),\dots,\mathcal{N}(0, S_N)$, \(S_i \sim \P\), defined as in \eqref{eq:Kn} and \eqref{eq:K}.
	Then there exists a finite constant $C$ such that with high probability  
	$$ \|M_n-M\|^2_F \leq C \frac{N^2}{n}.$$
\end{Prop}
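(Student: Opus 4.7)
The plan is to reduce the Frobenius-norm statement to an entrywise bound and then exploit the explicit Gaussian expression \eqref{eq:OT_map_norm}. Writing
\[
\|M_n - M\|_F^2 = \sum_{i,j=1}^{N} \bigl(K_n(\mu_i,\mu_j) - K(\mu_i,\mu_j)\bigr)^2,
\]
the target rate $N^2/n$ follows once one establishes, uniformly in $i,j$ and with high probability, that $|K_n(\mu_i,\mu_j) - K(\mu_i,\mu_j)|^2 \leq C/n$. Assuming $F$ is locally Lipschitz---as holds on any bounded interval for the standard families \eqref{eq:square:exp}--\eqref{eq:power:exp}---this reduces to a $1/\sqrt{n}$ bound on
\[
\delta_{ij} := \bigl|\|T_{i,n}^{-1}-T_{j,n}^{-1}\|^2_{L^2(\bar{\mu}_n)} - \|T_{i}^{-1}-T_{j}^{-1}\|^2_{L^2(\bar{\mu})}\bigr|.
\]
The random ranges of the matrices involved are handled by a truncation argument based on $\mathbb{E}\operatorname{tr}(S) < \infty$, restricting the analysis to a high-probability event on which $\bar{S}$, $\bar{S}_n$, and the $S_i$ all lie in a fixed compact subset $\mathcal{C}$ of the positive-definite cone.

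In the Gaussian case, the explicit expression \eqref{eq:OT_map_norm} turns $\delta_{ij}$ into a smooth function of the pair $(\bar{S}_n, \bar{S})$ once $S_i, S_j$ are frozen in $\mathcal{C}$. On $\mathcal{C}$, the maps $A \mapsto A^{1/2}$, $A \mapsto A^{-1/2}$ and $(A,B) \mapsto (A^{1/2} B A^{1/2})^{1/2}$ are all Lipschitz in operator norm, so a routine chaining of these Lipschitz estimates yields
\[
\delta_{ij} \leq L_{\mathcal{C}} \, \|\bar{S}_n - \bar{S}\|_{op}
\]
on the same event. The crucial ingredient is then the parametric rate $\|\bar{S}_n - \bar{S}\|_{op} = O_{\mathbb{P}}(1/\sqrt{n})$ for the empirical Bures--Wasserstein barycenter. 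This can either be cited from the quantitative statistical analysis in \citet{kroshnin2019statistical}, or recovered directly by linearising the fixed-point equation $\bar{S}_n = \frac{1}{n}\sum_i (S_i^{1/2}\bar{S}_n S_i^{1/2})^{1/2}$ around $\bar{S}$ and invoking a matrix-valued central limit theorem, with $\mathbb{E}\operatorname{tr}(S) < \infty$ providing the necessary moment control.

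Combining these two ingredients gives $\delta_{ij} = O_{\mathbb{P}}(1/\sqrt{n})$ uniformly in $i,j$, and summing the $N^2$ resulting squared errors yields $\|M_n - M\|_F^2 \leq C N^2/n$ on an event of high probability. The main obstacle is the quantitative perturbation bound for the Bures--Wasserstein barycenter together with its clean propagation through the nested matrix square roots in \eqref{eq:GauOT}. The non-compactness of the positive cone makes the Lipschitz constants inherently event-dependent, so the truncation event and the concentration event must be made simultaneously compatible; the moment hypothesis $\mathbb{E}\operatorname{tr}(S) < \infty$ is precisely what is needed to absorb the tails outside the truncation event into the final constant $C$.
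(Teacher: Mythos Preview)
Your argument is sound and reaches the same conclusion as the paper, but the route is genuinely different. The paper does not work directly with the explicit Frobenius-norm expression \eqref{eq:OT_map_norm}. Instead it introduces the optimal transport map $R_n$ from $\bar\mu$ to $\bar\mu_n$, uses the change-of-variables identity $\|T_{i,n}^{-1}-T_{j,n}^{-1}\|_{L^2(\bar\mu_n)}=\|T_{i,n}^{-1}\circ R_n-T_{j,n}^{-1}\circ R_n\|_{L^2(\bar\mu)}$, and then bounds $\|T_{i,n}^{-1}\circ R_n-T_i^{-1}\|_{L^2(\bar\mu)}$ by combining two ingredients from \citet{kroshnin2019statistical}: the rate $\|R_n-\mathrm{id}\|_{L^2(\bar\mu)}=O_P(n^{-1/2})$ and a linearisation $T_{i,n}^{-1}=T_i^{-1}+D(\bar S_n-\bar S)+o(\|\bar S_n-\bar S\|_F)$ with $D$ a bounded linear operator. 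From there the entrywise bound and the Lipschitz assumption on $F$ finish the proof exactly as in your last paragraph.

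Your approach is arguably more elementary in the Gaussian setting: by exploiting \eqref{eq:OT_map_norm} directly you bypass the $R_n$-composition trick and the abstract linearisation, replacing them by concrete Lipschitz estimates for $A\mapsto A^{1/2}$ and $A\mapsto A^{-1/2}$ on a compact set of positive-definite matrices. The paper's route, on the other hand, reuses the same change-of-variables machinery that drives the qualitative consistency results (Lemmas in the appendix), so it is more unified with the rest of the paper and less tied to the Gaussian closed form. Both approaches ultimately rest on the same quantitative input, namely the $O_P(n^{-1/2})$ rate for the empirical Bures--Wasserstein barycenter from \citet{kroshnin2019statistical}. Your explicit truncation to a compact set $\mathcal{C}$ makes rigorous a point the paper leaves implicit in its $O_P$ notation.
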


Finally, for the Kernel with Gaussian distributions, it is possible to understand the stationarity property of the kernel. The following proposition illustrates that in the Gaussian case the kernel is indeed invariant with respect to orthogonal transformations. 

\begin{Prop}
	\label{prop:stationarity}
	Let \(U\) be some predefined orthogonal matrix, and set set \(\phi_{U}\) be a deterministic map, that sends any \(\mathcal{N}(0,S)\)to \(\mathcal{N}(0,USU^T)\). For any \(i = 1,..., n\) denote as
	\(T_{i, \phi}\) the optimal transportation map \( T_{i, \phi \sharp} \phi_{U}\left(\mathcal{N}(0, S_i)\right) 
	= \phi_{U}\left(\mathcal{N}(0, \Sb) \right)\). Then it holds
	\begin{equation}
	\label{eq:stationarity}
	\left\|T^{-1}_{i, \phi} - T^{-1}_{j, \phi} \right\|_{L^2(\phi_U(\mub))} 
	= \left\|T^{-1}_i - T^{-1}_j \right\|_{L^2(\mub)}.
	\end{equation}
\end{Prop}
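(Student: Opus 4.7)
The plan is to exploit the explicit form of the optimal transport map between centred Gaussian measures, given in \eqref{eq:GauOT}, together with the intertwining identity $(UAU^\T)^{\alpha}=UA^{\alpha}U^\T$ that holds for any orthogonal $U$ and positive semi-definite $A$ (since conjugation by $U$ preserves the spectrum and commutes with functional calculus).

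First, I would apply the formula \eqref{eq:GauOT} with $S_i$ replaced by $US_iU^\T$ and $\bar{S}$ replaced by $U\bar{S}U^\T$, since by definition $T_{i,\phi}$ transports $\mathcal{N}(0,US_iU^\T)$ to $\mathcal{N}(0,U\bar{S}U^\T)$. This gives
\[
T_{i,\phi}^{-1} = (U\bar{S}U^\T)^{-1/2}\bigl((U\bar{S}U^\T)^{1/2}(US_iU^\T)(U\bar{S}U^\T)^{1/2}\bigr)^{1/2}(U\bar{S}U^\T)^{-1/2}.
\]
Using the intertwining identity repeatedly, each bracket simplifies by pulling $U$ and $U^\T$ out, and after cancelling the intermediate $U^\T U = I$ factors one obtains the clean conjugation
\[
T_{i,\phi}^{-1} = U\,T_i^{-1}\,U^\T.
\]

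Second, I would plug this into the left-hand side of \eqref{eq:stationarity}. Writing $A_{ij}=T_i^{-1}-T_j^{-1}$, so that $T_{i,\phi}^{-1}-T_{j,\phi}^{-1}=UA_{ij}U^\T$, the squared $L^2(\phi_U(\mub))$ norm equals $\mathbb{E}\|UA_{ij}U^\T Y\|^2$ for $Y\sim\mathcal{N}(0,U\bar{S}U^\T)$. The change of variables $Y=UX$ with $X\sim\mub=\mathcal{N}(0,\bar{S})$ is distributional (since $U\bar{S}U^\T=\mathrm{Cov}(UX)$), and since $U$ is an isometry for the Euclidean norm, $\|UA_{ij}U^\T UX\|=\|UA_{ij}X\|=\|A_{ij}X\|$. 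Taking expectation yields $\|A_{ij}\|^2_{L^2(\mub)}=\|T_i^{-1}-T_j^{-1}\|^2_{L^2(\mub)}$, which is exactly the right-hand side of \eqref{eq:stationarity}.

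The calculation is essentially mechanical once the conjugation identity $T_{i,\phi}^{-1}=UT_i^{-1}U^\T$ is established; the only point worth stating carefully is the functional-calculus fact $(UAU^\T)^{1/2}=UA^{1/2}U^\T$, which I would justify in one line by observing that the right-hand side is positive semi-definite and squares to $UAU^\T$, so by uniqueness of the p.s.d. square root it must coincide with the left-hand side. No further obstacle arises.
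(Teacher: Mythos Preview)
Your proof is correct and follows essentially the same route as the paper: both rely on the explicit Gaussian transport formula \eqref{eq:GauOT} together with the intertwining identity $(UAU^\T)^{1/2}=UA^{1/2}U^\T$ for orthogonal $U$. The only cosmetic difference is that the paper substitutes directly into the Frobenius-norm expression \eqref{eq:OT_map_norm}, whereas you first isolate the clean conjugation $T_{i,\phi}^{-1}=UT_i^{-1}U^\T$ and then compute the $L^2$ norm via the change of variables $Y=UX$; the underlying computation is the same.
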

Equality~\eqref{eq:stationarity} ensures stationarity of the kernels under application of transformation \(\phi_{U}\).

\section{Numerical simulations}\label{s:num}

\subsection{Computational aspects}

In practice, finding analytical representations of optimal transportation maps
is a difficult issue, 
especially if the dimension of the problem grows.
A possible solution consists in approximating 
an optimal transportation map by its empirical counterpart.
Let \(\mu_{m}\) and \(\nu_{m} \) be empirical measures 
sampled from \(\mu \) and \(\nu \) respectively. Then the optimal Monge map 
\(T_{\sharp}\mu = \nu \) can be replaced by \(T^{m}_{\sharp}\mu_m = \nu_m \), see e.g.
~\citet{chernozhukov2017monge} or~\citet{boeckel2018multivariate}.
In this case, the problem of finding \(T^{m}\) is reduced to the solution of
assignment problem with quadratic cost and can be solved by the \textit{adagio} \textsf{R}-package by  ~\citet{borchersadagio}.

In dimension $p=2$ or $p=3$, it is also possible to represent the distributions by their matrices of probability weights on regular grids. Optimal transport maps can then be approximated, by means of various numerical procedures \citet{luenberger1984linear,gottschlich2014shortlist,merigot2011multiscale}. In our practical implementations, we tend to use the packages \citet{trans} and \citet{barycenter}, with the \textsf{R} programming language.

\subsection{ Numerical study of the kernel consistency on a subspace of Gaussian measures}
In what follows we present some simulations to highlight the consistency of the empirical kernel obtained in the Gaussian case from the empirical barycenter. 
For this we consider a population \(\mathcal{F}\) of \(100000\) centred Gaussians on \(\mathbb{R}^d\) with covariance \(S_i = A_iA'_i \), 
with \(i = 1,..., n\),
where 
\( A_i = (a_{jk})^d_{j, k = 1}, a_{jk} \sim ~\text{Unif}[5, 15] \). In these experiments we consider \(d = (4, 7, 15, 30)\).
We compute the true barycenter \(\mathcal{N}(0, \Sb)\) for which the whole \(\mathcal{F}\) is used, while \(\Sb_n\) is computed as a Wasserstein-mean of a random \(n\)-sample (with replacement) from \(\mathcal{F}\).
Let \(M\) and \(M_n\) be the covariance matrices, 
obtained from the kernels \(K\) and \(K_n\) constructed using~\eqref{eq:square:exp} with parameters \(l = \sigma = 1\) on a grid of \(N = 30\) randomly selected measures from \(\mathcal{F}\).

Table~\ref{tab:gauss} illustrates the mean approximation error rate \(\|M_n - M \|_{F}\) for the cases \( n = (20, 140, 260, 380, 500, 620)\).

\begin{table}[htbp]
	\label{tab:gauss}
	{\caption{Error: \(\|K_n - K\|_{F}\) for centred Gaussians on \(\mathbb{R}^d\)}}
	{%
		\begin{tabular}{l|llllll}
			& n =  20 & n = 140 & n = 260 & n = 380 & n = 500 & n = 620\\ \hline
			d = 4   & 1.52 & 0.69 & 0.16 & 0.29 & 0.24 & 0.14 \\ 
			d = 7 & 2.08 & 0.59 & 0.17 & 0.19 & 0.11 & 0.14  \\
			d = 15 & 0.91 & 0.12 & 0.09 & 0.08 & 0.05 & 0.05 \\
			d = 30  & 0.90 & 0.13 & 0.05 & 0.03 & 0.04 & 0.02
		\end{tabular}
	}
\end{table}

As expected, we can see convergence of the  empirical kernel towards the theoretical one in all cases. 

\subsection{Prediction experiments on simulated data} \label{subsection:pred:gaussians}

Then we consider the following simulations for the 2 dimensional case. We simulate 100 random two-dimensional Gaussian distributions split into a training sample of 50 and a test sample of 50. Both mean vectors and covariance matrices are chosen randomly. The mean vector follows a uniform distribution over $[0.2,0.8]^2$. The covariance matrix is isotropic and the standard deviation is uniform over $[0.01^2,0.02^2]$. The value of the random field $Y$ for a Gaussian distribution $\mu$, given by its mean $(m_1,m_2)^T$ and variance $\sigma^2$, is given by 
$$Y(\mu)= \frac{(m_1-m_2^2)}{ 1+\sigma}.$$

We then carry out our suggested Gaussian process model, based on the kernels suggested in Proposition \ref{prop:embedding}.
Optimal transport maps $T_{\mu}^{-1}$, from the barycenter to the Gaussian measures $\mu$, are calculated using the package  \citet{trans} and barycenters are calculated using the package \citet{barycenter} with parameter $\lambda=20$ to balance computational time and similarity between the penalized transport and the optimal transport without regularization. \\

More precisely, the Gaussian distributions are discretized over a grid of $50 \times 50$ cells on  $[0,1]^2$. The Gaussian distributions are thus approximated by discrete distributions on the grid. We remark that the package \citet{trans} does not exactly provide deterministic transport maps. Indeed, the probability mass of a given input grid point can be split and mapped to several output grid points. Numerically, in this case, we transport all the probability mass of the input grid point to the output grid point that is assigned the most mass by the package \citet{trans}. Hence, to each discretized input Gaussian measure $\mu$, we associate a transport map $T_{\mu}^{-1}$ from the barycenter that is an approximation of the inverse of the optimal transport map from $\mu$ to the barycenter. Nevertheless, since the mapping from $\mu$ to $T_{\mu}^{-1}$ is uniquely defined in our procedure, Remark \ref{rem:approxi:transport} applies and we are guaranteed to obtain positive definite kernels.

The kernel we choose is $K_\theta$ given by
$$K_{\theta}(\mu,\nu):=\theta_1^2 *\exp(-\theta_2 \left\|T^{-1}_{\mu} - T^{-1}_{\nu} \right\|_{L^2(\mub)}^{\theta_3})+ \theta_4 1_{\left\|T^{-1}_{\mu} - T^{-1}_{\nu} \right\|_{L^2(\mub)}=0}$$ 
for $\theta_1 \in [0.05,10]$, $\theta_2 \in [0.01,10]$, $\theta_3 \in [0.5,2]$ and $\theta_4 \in [10^{-5},1]$. We will use the kernel with the parameters chosen to maximize the likelihood but also parameters chosen to minimize the sum of the cross-validation square errors \citet{bachoc2013cross,bachoc2018asymptotic}. For cross-validation, the total variance parameter $\theta_1^2 + \theta_4$ is estimated as suggested in \citet{bachoc2013cross}.

We compare our kernel methods with the kernel smoothing procedure of \citet{pmlr-v31-poczos13a}. 
This procedure consists in predicting $Y(\mu) \in \mathbb{R}$ by a weighted average of $Y(\mu_1),...,Y(\mu_n)$ where the weights are computed by applying a kernel to the distances $D(\mu,\mu_1),...,D(\mu,\mu_n)$ where $D$, as suggested in \citet{pmlr-v31-poczos13a} is the $L^1$ distances between the  probability density functions. The kernel is the triangular kernel as in \citet{pmlr-v31-poczos13a}, and its bandwidth is selected by minimizing an empirical mean square error based on sample splitting (see \citet{pmlr-v31-poczos13a}).
We remark that there is no estimate of the prediction error $Y(\mu) - \hat{Y}(\mu)$  which is a downside compared to the Gaussian process model considered in this paper.\\ 

We present hereafter in Table~2 the results obtained, with $50$ observations and $50$ values to be predicted. We study the Root Mean Square Error (RMSE) of the form
\[
\sqrt{
	\frac{1}{50} \sum_{i=1}^{50} ( \hat{Y}_i - Y_i )
},
\]
where the $Y_i$ are the values to be predicted and the $\hat{Y}_i$ are the predictions. We also study the $Q^2$ criterion which is equal to $1 - \text{RMSE}^2 / \text{var}$, where $\text{var}$ is the empirical variance of the values to be predicted. Finally we study the Confidence Interval Coverage (CIC) which corresponds to the frequency of the event that the predicted value belongs to the $90\%$ confidence interval from the Gaussian process model.
\begin{table}[htbp]
	{\caption{Prediction results for Gaussian simulations.}}
	{%
		\begin{tabular}{l|lll}
			& RMSE & $Q^2$ & CI Coverage  \\ \hline
			Kernel Smoothing   & 0.15  & 0.61 & NA \\ 
			Gaussian Process  & 0.10 & 0.81 & 0.87 \\
			Gaussian Process CV  & 0.10 & 0.81 & 0.88
		\end{tabular}
		\label{GPresult}
	}
\end{table}
From the table, one observes that the GP process model based on the kernel we suggest provides a better accuracy, catching better the variability of the underlying process. 

\subsection{Experiments on real data : stress response to traction for materials in nuclear safety}
We focus on a computer code called CASTEM code (see \citet{CASTEM}) from the French Atomic Energy Commission (CEA) designed to calculate equivalent stresses on biphasic materials subjected to uni-axial traction. The system  is modelled as a unit square containing $m$ circular inclusions, all with the same radius $R$ at random locations associated to a numerical value which is the stress response. The simulations are performed in two dimensions over  $[0,1]^2$. The input of the codes are $m=10$ disks located at $m$ points $\{c_1,\dots,c_m\}$ while the stress responses are scalar numerical values provided by the CASTEM code.
As pointed out in \citet{ginsbourger2016design}, finding a proper distance between the inputs to forecast the stress is a very difficult task. \\

In this framework, we propose to consider each input as a uniform distribution $\mu$ on the union of the disks. For all the inputs $i=1,\dots,n$, we let $c^{(i)}=(c_1^{(i)},\dots,c_m^{(i)})$ be the vector of dimension $2m$ composed by the $m$ centers of the disks and we let $D_j^{(i)}$ be the disk with center $c_j^{(i)}$ and radius $R$. Then we let $\mu_i$ be the Uniform distribution over $\cup_{j=1}^m D^{(i)}_j$. Then the stress is  considered as a Gaussian random field indexed by the $\mu_i$'s. \\
As previously, to compute the barycenter, we use the package provided in \citet{barycenter}. We use a grid over $[0,1]^2$ that discretizes the set into $50 \times 50$ cells. The uniform distribution on the set of disks is evaluated onto these cells and is approximated by a discrete distribution that is considered as an image. The optimal transport maps from the distribution to the barycenters are calculated using \citet{trans}, similarly as in Section \ref{subsection:pred:gaussians}. We compare to the kernel smoothing procedure also as in Section \ref{subsection:pred:gaussians}.

The results are presented in Table 3 in the same way as in Table 2. In Table 3, the methods use $500$ ouputs of the CASTEM code and predict $400$ other outputs.

\begin{table}[htbp]
	{\caption{Prediction of the CASTEM code output.}}
	{
		\begin{tabular}{l|lll}
			& RMSE & $Q^2$ & CI Coverage \\\hline
			Kernel Smoothing  & 0.96      & 0.03   & NA \\
			Gaussian Process   & 0.93 & 0.10 & 0.92 \\
			Gaussian Process CV & 0.92 & 0.11 & 1
		\end{tabular}
	}
\end{table}
As noted by many specialists, forecasting the CASTEM code is a very hard task, given that the inputs are very complex, which explains the poor $Q^2$ score for the three methods. Yet the method proposed in this work provides some improvements with respect to the state of the art method from \citet{pmlr-v31-poczos13a}. We point out that cross validation of the parameters for the Gaussian Process provides a very small improvement of the prediction but at the expense of overly large confidence intervals. 

We remark that the kernel we provide is a positive definite kernel as required to use the Gaussian Process modelling framework. Using directly a kernel by computing the exponential of the square $W_2$ Wasserstein distance between the distributions does not lead to a positive definite kernel. Actually Figure \ref{fig:my_label} shows the repartition of the eigenvalues of the $900 \times 900$ covariance matrix based on this kernel. We observe that many eigenvalues are negative (before the red line in the figure where we plot the logarithm of $1$ plus the eigenvalues). 
\begin{figure}
	\centering
	\includegraphics[height=10cm,width=14cm]{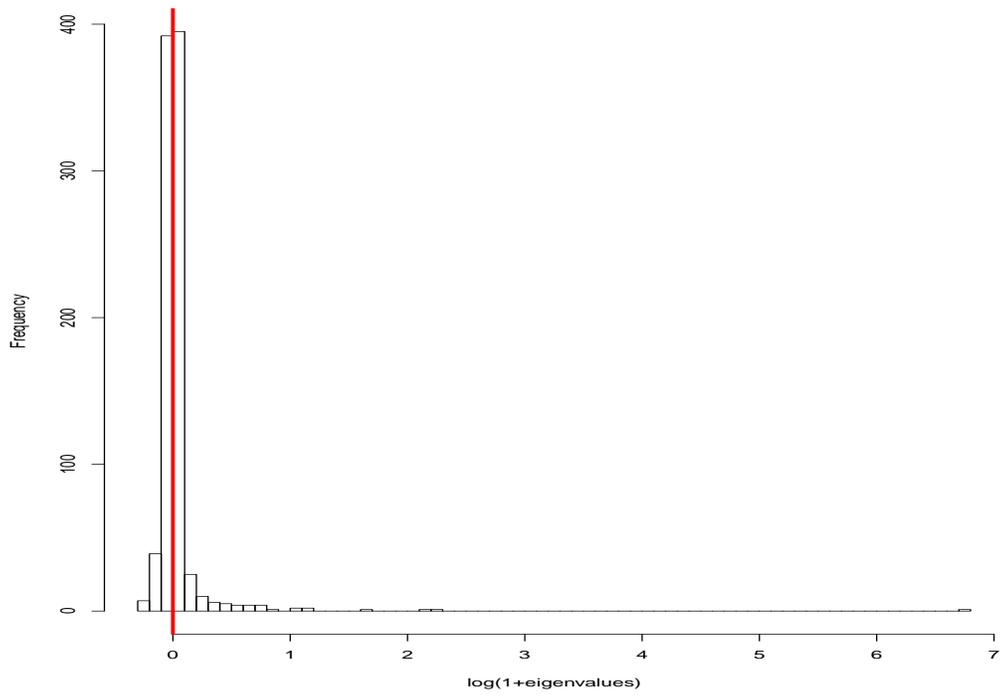}
	\caption{Distribution of the eigenvalues of the $900 \times 900$ matrix obtained by the kernel of the form $\exp( -W_2^2( \mu , \nu ) )$. Many eigenvalues are negative, which shows that this kernel is not positive definite.}
	\label{fig:my_label}
\end{figure}

\section{Conclusion and Future Directions} \label{s:4uSasha}
In this work, we have provided a theoretical way to use Wasserstein barycenters in order to define general kernels using optimal transportation maps. Considering the distance between the optimal transportation maps provide a natural way to quantify correlations between the values of a process indexed by the distribution and provides a generalization to multi-dimensional case of the work in~\citet{myieee}. Using barycenter requires that the distributions are drawn according to the same measure over the set of distributions. This restricts the framework of the study to the case where the Gaussian process is defined on the support of this measure. For applications, this does not a play a too important feature since inputs are often simulated according to a specified distribution. Yet for theoretical issues, this sets the frame of this study to the infill case and not the asymptotic frame. In this case, few results exist in the statistical literature on Kriging, and thus we focused on micro-ergodicity of the parameters, proving that consistent estimate can be studied.  \\
Finally contrary to the one-dimensional case, computational issues arise naturally when the  Wasserstein distance is required. Hence the computation of a barycenter with respect to Wasserstein distance is a difficult optimization program, unless the distributions are Gaussian, leading to tractable computations as shown in Section~\ref{s:comput}.  Yet this idea of linearization around the barycenter to obtain a valid covariance kernel could be used and generalized to regularized Wasserstein distance using methods proposed in \citet{cuturi2014fast} for instance  to provide a more tractable way of building kernels. 

\vskip .1in

\noindent
\textit{Acknowledgments: } the authors would like to thank Drs. Jean Baccou and Fr\'ed\'eric P\'erales (respectively at LIMAR and LPTM, Institut de Radioprotection et de S\^uret\'e Nucl\'eaire, Saint-Paul-l\`es-Durance, France) for the CASTEM data set, and Dr. Cl\'ement Chevalier (now with the Swiss Statistical Office, Neuch\^atel, Switzerland) who has been involved in investigations on this data set in the framework of the \href{http://redice.emse.fr/}{ReDICE consortium}.

\vskip 0.2in
\bibliographystyle{plainnat}
\bibliography{references.bib}

\begin{thebibliography}{47}
\providecommand{\natexlab}[1]{#1}
\providecommand{\url}[1]{\texttt{#1}}
\expandafter\ifx\csname urlstyle\endcsname\relax
  \providecommand{\doi}[1]{doi: #1}\else
  \providecommand{\doi}{doi: \begingroup \urlstyle{rm}\Url}\fi

\bibitem[Abrahamsen(1997)]{abrahamsen97review}
P.~Abrahamsen.
\newblock A review of {Gaussian} random fields and correlation functions.
\newblock Technical report, Norwegian computing center, 1997.

\bibitem[Agueh and Carlier(2011)]{agueh2011barycenters}
Martial Agueh and Guillaume Carlier.
\newblock Barycenters in the wasserstein space.
\newblock \emph{SIAM Journal on Mathematical Analysis}, 43\penalty0
  (2):\penalty0 904--924, 2011.

\bibitem[Alvarez-Esteban et~al.(2015)Alvarez-Esteban, del Barrio,
  Cuesta-Albertos, and Matr{\'a}n]{alvarez2015wide}
Pedro~C Alvarez-Esteban, E~del Barrio, JA~Cuesta-Albertos, and C~Matr{\'a}n.
\newblock Wide consensus for parallelized inference.
\newblock \emph{arXiv preprint arXiv:1511.05350}, 2015.

\bibitem[\'Alvarez-Esteban et~al.(2016)\'Alvarez-Esteban, del Barrio,
  Cuesta-Albertos, and Matrán]{ALVAREZESTEBAN2016744}
Pedro~C. \'Alvarez-Esteban, E.~del Barrio, J.A. Cuesta-Albertos, and
  C.~Matrán.
\newblock A fixed-point approach to barycenters in wasserstein space.
\newblock \emph{Journal of Mathematical Analysis and Applications},
  441\penalty0 (2):\penalty0 744 -- 762, 2016.
\newblock ISSN 0022-247X.
\newblock \doi{https://doi.org/10.1016/j.jmaa.2016.04.045}.
\newblock URL
  \url{http://www.sciencedirect.com/science/article/pii/S0022247X16300907}.

\bibitem[Anderes(2010)]{And2010}
E.~Anderes.
\newblock On the consistent separation of scale and variance for {G}aussian
  random fields.
\newblock \emph{The Annals of Statistics}, 38:\penalty0 870--893, 2010.

\bibitem[Bachoc(2013)]{bachoc2013cross}
Fran{\c{c}}ois Bachoc.
\newblock Cross validation and maximum likelihood estimations of
  hyper-parameters of gaussian processes with model misspecification.
\newblock \emph{Computational Statistics \& Data Analysis}, 66:\penalty0
  55--69, 2013.

\bibitem[Bachoc et~al.(2018{\natexlab{a}})Bachoc, Gamboa, Loubes, and
  Venet]{myieee}
Fran{\c{c}}ois Bachoc, Fabrice Gamboa, Jean-Michel Loubes, and Nil Venet.
\newblock A gaussian process regression model for distribution inputs.
\newblock \emph{IEEE Transactions on Information Theory}, 64\penalty0
  (10):\penalty0 6620--6637, 2018{\natexlab{a}}.

\bibitem[Bachoc et~al.(2018{\natexlab{b}})]{bachoc2018asymptotic}
Fran{\c{c}}ois Bachoc et~al.
\newblock Asymptotic analysis of covariance parameter estimation for gaussian
  processes in the misspecified case.
\newblock \emph{Bernoulli}, 24\penalty0 (2):\penalty0 1531--1575,
  2018{\natexlab{b}}.

\bibitem[Bevilacqua et~al.(2016)Bevilacqua, Faouzi, Furrer, and
  Porcu]{bevilacqua2016estimation}
Moreno Bevilacqua, Tarik Faouzi, Reinhard Furrer, and Emilio Porcu.
\newblock Estimation and prediction using generalized wendland covariance
  functions under fixed domain asymptotics.
\newblock \emph{arXiv preprint arXiv:1607.06921}, 2016.

\bibitem[{Bigot} and {Klein}(2012)]{bigot2012characterization}
J.~{Bigot} and T.~{Klein}.
\newblock {Characterization of barycenters in the Wasserstein space by
  averaging optimal transport maps}.
\newblock \emph{ArXiv e-prints}, December 2012.

\bibitem[Boeckel et~al.(2018)Boeckel, Spokoiny, and
  Suvorikova]{boeckel2018multivariate}
Melf Boeckel, Vladimir Spokoiny, and Alexandra Suvorikova.
\newblock Multivariate brenier cumulative distribution functions and their
  application to non-parametric testing.
\newblock \emph{arXiv preprint arXiv:1809.04090}, 2018.

\bibitem[Boissard et~al.(2015)Boissard, Le~Gouic, Loubes,
  et~al.]{boissard2015distribution}
Emmanuel Boissard, Thibaut Le~Gouic, Jean-Michel Loubes, et~al.
\newblock Distribution' s template estimate with wasserstein metrics.
\newblock \emph{Bernoulli}, 21\penalty0 (2):\penalty0 740--759, 2015.

\bibitem[Borchers(2016)]{borchersadagio}
H~Borchers.
\newblock adagio: Discrete and global optimization routines.
\newblock \emph{URL http://CRAN. R-project. org/package= adagio}, 2016.

\bibitem[Brenier(1991)]{brenier1991polar}
Y.~Brenier.
\newblock Polar factorization and monotone rearrangement of vector-valued
  functions.
\newblock \emph{Communications on pure and applied mathematics}, 44\penalty0
  (4):\penalty0 375--417, 1991.

\bibitem[Chernozhukov et~al.(2017)Chernozhukov, Galichon, Hallin, Henry,
  et~al.]{chernozhukov2017monge}
Victor Chernozhukov, Alfred Galichon, Marc Hallin, Marc Henry, et~al.
\newblock Monge--kantorovich depth, quantiles, ranks and signs.
\newblock \emph{The Annals of Statistics}, 45\penalty0 (1):\penalty0 223--256,
  2017.

\bibitem[Cristianini and Shawe-Taylor(2000)]{cristianini2000support}
Nello Cristianini and John Shawe-Taylor.
\newblock \emph{Support Vector Machines}.
\newblock Cambridge University Press, 2000.

\bibitem[Cuesta and Matr{\'a}n(1989)]{cuesta1989notes}
Juan~Antonio Cuesta and Carlos Matr{\'a}n.
\newblock Notes on the {W}asserstein metric in {H}ilbert spaces.
\newblock \emph{Ann. Probab.}, 17\penalty0 (3):\penalty0 1264--1276, 1989.
\newblock ISSN 0091-1798.
\newblock URL
  \url{http://links.jstor.org/sici?sici=0091-1798(198907)17:3<1264:NOTWMI>2.0.CO;2-J&origin=MSN}.

\bibitem[Cuturi and Doucet(2014)]{cuturi2014fast}
Marco Cuturi and Arnaud Doucet.
\newblock Fast computation of wasserstein barycenters.
\newblock In \emph{International Conference on Machine Learning}, pages
  685--693, 2014.

\bibitem[{del Barrio} et~al.(2018){del Barrio}, {Cuesta-Albertos}, {Hallin},
  and {Matr{\'a}n}]{2018arXiv180601238D}
Eustasio {del Barrio}, Juan~A. {Cuesta-Albertos}, Marc {Hallin}, and Carlos
  {Matr{\'a}n}.
\newblock {Center-Outward Distribution Functions, Quantiles, Ranks, and Signs
  in $\mathbb{R}^d$}.
\newblock \emph{arXiv e-prints}, art. arXiv:1806.01238, Jun 2018.

\bibitem[{Figalli}(2018)]{2018arXiv180504946F}
Alessio {Figalli}.
\newblock {On the Continuity of Center-Outward Distribution and Quantile
  Functions}.
\newblock \emph{arXiv e-prints}, art. arXiv:1805.04946, May 2018.

\bibitem[Ginsbourger et~al.(2016)Ginsbourger, Baccou, Chevalier, and
  Perales]{ginsbourger2016design}
David Ginsbourger, Jean Baccou, Cl{\'e}ment Chevalier, and Fr{\'e}d{\'e}ric
  Perales.
\newblock Design of computer experiments using competing distances between
  set-valued inputs.
\newblock In \emph{mODa 11-Advances in Model-Oriented Design and Analysis},
  pages 123--131. Springer, 2016.

\bibitem[Gottschlich and Schuhmacher(2014)]{gottschlich2014shortlist}
Carsten Gottschlich and Dominic Schuhmacher.
\newblock The shortlist method for fast computation of the earth mover's
  distance and finding optimal solutions to transportation problems.
\newblock \emph{PloS one}, 9\penalty0 (10):\penalty0 e110214, 2014.

\bibitem[http://www cast3m.cea.fr()]{CASTEM}
http://www cast3m.cea.fr.
\newblock Cast3m software.

\bibitem[Klatt(2018)]{barycenter}
Marcel Klatt.
\newblock \emph{Regularized Wasserstein Distances and Barycenters}, 2018.
\newblock URL
  \url{https://cran.r-project.org/web/packages/Barycenter/Barycenter.pdf}.
\newblock R package version 1.3.1.

\bibitem[Kolouri et~al.(2015)Kolouri, Zou, and Rohde]{KolouriZouRohde}
Soheil Kolouri, Yang Zou, and Gustavo~K. Rohde.
\newblock Sliced {W}asserstein kernels for probability distributions.
\newblock \emph{CoRR}, abs/1511.03198, 2015.
\newblock URL \url{http://arxiv.org/abs/1511.03198}.

\bibitem[Kroshnin et~al.(2019)Kroshnin, Spokoiny, and
  Suvorikova]{kroshnin2019statistical}
Alexey Kroshnin, Vladimir Spokoiny, and Alexandra Suvorikova.
\newblock Statistical inference for bures-wasserstein barycenters.
\newblock \emph{arXiv preprint arXiv:1901.00226}, 2019.

\bibitem[Le~Gouic and Loubes(2017)]{le2017existence}
Thibaut Le~Gouic and Jean-Michel Loubes.
\newblock Existence and consistency of wasserstein barycenters.
\newblock \emph{Probability Theory and Related Fields}, 168\penalty0
  (3-4):\penalty0 901--917, 2017.

\bibitem[Loh(2015)]{loh2015estimating}
Wei-Liem Loh.
\newblock Estimating the smoothness of a gaussian random field from irregularly
  spaced data via higher-order quadratic variations.
\newblock \emph{The Annals of Statistics}, 43\penalty0 (6):\penalty0
  2766--2794, 2015.

\bibitem[Luenberger et~al.(1984)Luenberger, Ye, et~al.]{luenberger1984linear}
David~G Luenberger, Yinyu Ye, et~al.
\newblock \emph{Linear and nonlinear programming}, volume~2.
\newblock Springer, 1984.

\bibitem[M{\'e}rigot(2011)]{merigot2011multiscale}
Quentin M{\'e}rigot.
\newblock A multiscale approach to optimal transport.
\newblock In \emph{Computer Graphics Forum}, volume~30, pages 1583--1592. Wiley
  Online Library, 2011.

\bibitem[Micchelli et~al.(2006)Micchelli, Xu, and
  Zhang]{micchelli2006universal}
Charles~A Micchelli, Yuesheng Xu, and Haizhang Zhang.
\newblock Universal kernels.
\newblock \emph{Journal of Machine Learning Research}, 7\penalty0
  (Dec):\penalty0 2651--2667, 2006.

\bibitem[{Muandet} et~al.(2016){Muandet}, {Fukumizu}, {Sriperumbudur}, and
  {Sch{\"o}lkopf}]{2016arXiv160509522M}
K.~{Muandet}, K.~{Fukumizu}, B.~{Sriperumbudur}, and B.~{Sch{\"o}lkopf}.
\newblock {Kernel Mean Embedding of Distributions: A Review and Beyonds}.
\newblock \emph{ArXiv e-prints}, May 2016.

\bibitem[Peyr{\'e} et~al.(2019)Peyr{\'e}, Cuturi,
  et~al.]{peyre2019computational}
Gabriel Peyr{\'e}, Marco Cuturi, et~al.
\newblock Computational optimal transport.
\newblock \emph{Foundations and Trends{\textregistered} in Machine Learning},
  11\penalty0 (5-6):\penalty0 355--607, 2019.

\bibitem[Poczos et~al.(2013)Poczos, Singh, Rinaldo, and
  Wasserman]{pmlr-v31-poczos13a}
Barnabas Poczos, Aarti Singh, Alessandro Rinaldo, and Larry Wasserman.
\newblock Distribution-free distribution regression.
\newblock In Carlos~M. Carvalho and Pradeep Ravikumar, editors,
  \emph{Proceedings of the Sixteenth International Conference on Artificial
  Intelligence and Statistics}, volume~31 of \emph{Proceedings of Machine
  Learning Research}, pages 507--515, Scottsdale, Arizona, USA, 29 Apr--01 May
  2013. PMLR.
\newblock URL \url{http://proceedings.mlr.press/v31/poczos13a.html}.

\bibitem[Rasmussen(2004)]{rasmussen2004gaussian}
Carl~Edward Rasmussen.
\newblock Gaussian processes in machine learning.
\newblock In \emph{Advanced lectures on machine learning}, pages 63--71.
  Springer, 2004.

\bibitem[Rasmussen and Williams(2006)]{rasmussen06gaussian}
C.E. Rasmussen and C.K.I. Williams.
\newblock \emph{{Gaussian} Processes for Machine Learning}.
\newblock The MIT Press, Cambridge, 2006.

\bibitem[Santambrogio(2015)]{santambrogio2015optimal}
Filippo Santambrogio.
\newblock Optimal transport for applied mathematicians.
\newblock \emph{Birk{\"a}user, NY}, pages 99--102, 2015.

\bibitem[Santner et~al.(2003)Santner, Williams, and Notz]{santner03design}
T.J Santner, B.J Williams, and W.I Notz.
\newblock \emph{The Design and Analysis of Computer Experiments}.
\newblock Springer, New York, 2003.

\bibitem[Schoenberg(1938)]{schoenberg1938metric}
Isaac~J Schoenberg.
\newblock Metric spaces and completely monotone functions.
\newblock \emph{Annals of Mathematics}, pages 811--841, 1938.

\bibitem[Sch{\"o}lkopf and Smola(2002)]{scholkopf2002learning}
Bernhard Sch{\"o}lkopf and Alexander~J Smola.
\newblock \emph{Learning with kernels: support vector machines, regularization,
  optimization, and beyond}.
\newblock MIT press, 2002.

\bibitem[Schuhmacher et~al.(2019)Schuhmacher, Bähre, Gottschlich, Hartmann,
  Heinemann, and Schmitzer]{trans}
Dominic Schuhmacher, Björn Bähre, Carsten Gottschlich, Valentin Hartmann,
  Florian Heinemann, and Bernhard Schmitzer.
\newblock \emph{{transport}: Computation of Optimal Transport Plans and
  Wasserstein Distances}, 2019.
\newblock URL \url{https://cran.r-project.org/package=transport}.
\newblock R package version 0.11-0.

\bibitem[Stein(1999)]{stein99interpolation}
M.L. Stein.
\newblock \emph{Interpolation of Spatial Data: Some Theory for {Kriging}}.
\newblock Springer, New York, 1999.

\bibitem[{Trang Bui} et~al.(2018){Trang Bui}, {Loubes}, {Risser}, and
  {Balaresque}]{2018arXiv180610493T}
Thi~Thien {Trang Bui}, J-M {Loubes}, L~{Risser}, and P~{Balaresque}.
\newblock {Distribution regression model with a Reproducing Kernel Hilbert
  Space approach}.
\newblock \emph{arXiv e-prints}, art. arXiv:1806.10493, Jun 2018.

\bibitem[Uribe et~al.(2018)Uribe, Dvinskikh, Dvurechensky, Gasnikov, and
  Nedi\'c]{uribe2018distributed}
C\'esar~A. Uribe, Darina Dvinskikh, Pavel Dvurechensky, Alexander Gasnikov, and
  Angelia Nedi\'c.
\newblock Distributed computation of {W}asserstein barycenters over networks.
\newblock In \emph{2018 IEEE 57th Annual Conference on Decision and Control
  (CDC)}, 2018.
\newblock Accepted, arXiv:1803.02933.

\bibitem[Villani(2009)]{villani2009optimal}
C{\'e}dric Villani.
\newblock \emph{Optimal transport: old and new}, volume 338.
\newblock Springer Science \& Business Media, 2009.

\bibitem[Wendland(2004)]{wendland2004scattered}
Holger Wendland.
\newblock \emph{Scattered data approximation}, volume~17.
\newblock Cambridge university press, 2004.

\bibitem[Zhang(2004)]{zhang04inconsistent}
H.~Zhang.
\newblock Inconsistent estimation and asymptotically equivalent interpolations
  in model-based geostatistics.
\newblock \emph{Journal of the American Statistical Association}, 99:\penalty0
  250--261, 2004.

\end{thebibliography}
\newpage
\appendix

\section{Proofs}

\begin{description}
	\item[Proof of Propositions~\ref{prop:preservation:positive:definite} and \ref{prop:preservation:strictly:positive:definite}]
\end{description}
\begin{proof}
	For both propositions, only the fact that 1. implies 2. needs to be proved. Let us now do this.
	
	Let $f_1,\dots,f_n$ in $H$ and consider the matrix $\tilde{C}=((f_i,f_j)_H)_{\{i,j\}}$.  This matrix is a Gram matrix in $\mathbb{R}^{n \times n}$ hence there exists a non negative diagonal matrix $D$ and an orthogonal matrix $P$ such that
	$$ \tilde{C}=PDP^{'} = PD^{1/2} D^{1/2}P^{'}.$$
	Let $e_1,\dots,e_n$ be the canonical basis of $\mathbb{R}^n$. Then
	$$ e_i \tilde{C} e_j^{'}= u_i u_j^{'}$$ where $u_i=e_i P D^{1/2}.$ Note that the $u_i$'s are vectors in $\mathbb{R}^n$ that depend on the $f_1,\dots,f_n$. By polarization, we hence get that $ (f_i,f_j)_H=(u_i,u_j) $ where $(,)$  denotes the usual scalar product on $\mathbb{R}^n$. Hence we get that for any elements $f_1,\dots,f_n$ in $H$ there are $u_1,\dots,u_n$ in $\mathbb{R}^n$ such that $\|f_i-f_j\|_H=\|u_i-u_j\|.$ So any covariance matrix that can be written as $[ F(\|f_i-f_j\|_H) ]_{i,j}$ can be seen as a covariance matrix $[ F(\|u_i-u_j\|) ]_{i,j}$ on $\mathbb{R}^n$ and inherits its properties. The invertibility and non-negativity of this covariance matrix entail the invertibility and non-negativity of the first one, which proves the results.
\end{proof}

\begin{description}
	\item[Proof of Theorem~\ref{theo:microergo:hilbert}]
\end{description}
\begin{proof}
	Without loss of generality, we can assume that $h_0 = 0 \in  H$. 
	Let $\theta_1 , \theta_2 \in \Theta$, with $\theta_1 \neq \theta_2$. Then, there exists $t^* \in [0,L] $ so that $F_{\theta_1}(0) - F_{\theta_1}(2 t^*) \neq F_{ \theta_2 }(0) - F_{ \theta_2 }( 2t^*)$.
	
	For any $n \in \mathbb{N}$, let $e_1,...,e_n \in  H$ satisfy $( e_i , e_j )_{ H} = \mathbf{1}_{i=j}$.
	Consider the $2n$ elements $(f_1,...,f_{2n})$ made by the pairs $(-t^* e_i, t^* e_i)$ for $i=1,\dots,n$. Consider a Gaussian process $Y$ on $\bar{\mathcal{B}}_{2,L}$ with mean function zero and covariance function $K_{ \theta_1 }$. Then, the Gaussian vector $Z = (Y(f_i))_{i=1,...,2n}$ has covariance matrix $C$ given by 
	\[
	C_{i,j} =
	\begin{cases}
	F_{ \theta_1 }(0) & \mbox{if } i=j \\
	F_{ \theta_1 }(2 t^*) & \mbox{if $i$ even and $j=i+1$}  \\
	F_{ \theta_1 }(2 t^*) & \mbox{if $i$ odd and $j=i-1$}  \\
	F_{ \theta_1 }( \sqrt{2} t^*) & \mbox{else}.
	\end{cases}
	\]
	Hence, we have $C = D + M$ where $M$ is the matrix with all components equal to $K_{ \theta_1 }( \sqrt{2} t^*)$ and where $D$ is block diagonal, composed of $n$ blocks of size $2 \times 2$, with each block equal to 
	\[
	B_{2,2} = 
	\begin{pmatrix}
	F_{ \theta_1 }(0) - F_{ \theta_1 }( \sqrt{2} t^*) &  F_{ \theta_1 }(2 t^*) -  F_{ \theta_1 }( \sqrt{2} t^*) \\
	F_{ \theta_1 }(2 t^*) -  F_{ \theta_1 }( \sqrt{2} t^*) & 
	F_{ \theta_1 }(0) - F_{ \theta_1 }( \sqrt{2} t^*)
	\end{pmatrix}.
	\]
	Hence, in distribution, $Z = M + E$, with $M$ and $E$ independent, $M=(z,....,z)$ where $z \sim \mathcal{N}(0,K_{ \theta_1 }( \sqrt{2} t^*))$ and where the $n$ pairs $(E_{2k+1},E_{2k+2})$, $k=0,...,n-1$ are independent, with distribution $\mathcal{N}(0,B_{2,2})$. Hence, with $\bar{Z}_1 = (1/n) \sum_{k=0}^{n-1} Z_{2k+1}$, $\bar{Z}_2 = (1/n) \sum_{k=0}^{n-1} Z_{2k+2}$ and $\bar{E} = (1/n) \sum_{k=0}^{n-1} (E_{2k+1},E_{2k+2})^t$, we have
	\begin{align*}
	\hat{B} & :=
	\frac{1}{n} \sum_{i=0}^{n-1}
	\begin{pmatrix}
	Z_{2i+1} - \bar{Z}_1 \\
	Z_{2i+2} - \bar{Z}_2
	\end{pmatrix}
	\begin{pmatrix}
	Z_{2i+1} - \bar{Z}_1 \\
	Z_{2i+2} - \bar{Z}_2
	\end{pmatrix}^t  \\
	& = \frac{1}{n} \sum_{i=0}^{n-1}
	\begin{pmatrix}
	E_{2i+1}  \\
	E_{2i+2} 
	\end{pmatrix}
	\begin{pmatrix}
	E_{2i+1}  \\
	E_{2i+2} 
	\end{pmatrix}^t
	- \bar{E} \bar{E}^t \\
	& \to_{n \to \infty}^{p}
	B_{2,2}.
	\end{align*}
	Hence, there exists a subsequence $n' \to \infty$ so that, almost surely $\hat{B} \to B_{2,2}$ as $n' \to \infty$. Hence, almost surely $\hat{B}_{11} - \hat{B}_{1,2} \to K_{ \theta_1 }(0) - K_{ \theta_1 }(2 t^*)$ as $n' \to \infty$. Hence, the set 
	\[
	A = \left\{
	g \in \bar{F};
	\hat{B}_{2,2}
	\left(
	g(f_1),....g(f_{2n'})  
	\right) 
	\to_{n' \to \infty} 
	F_{ \theta_1 }(0) - F_{ \theta_1 }(2 t^*)
	\right\}
	\]
	satisfies $P_{\theta_1}(A) = 1$.
	With the same arguments, we can show $P_{\theta_2}(B) = 1$, where 
	\[
	B= \left\{
	g \in \bar{F};
	\hat{B}_{2,2}
	\left(
	g(f_1),....g(f_{2n''})  
	\right) 
	\to_{n'' \to \infty} 
	F_{ \theta_2 }(0) - F_{ \theta_2 }(2 t^*)
	\right\}
	\]
	where $n''$ is a subsequence extracted from $n'$. Since $A \cap B = \varnothing$, it follows that $P_{\theta_2}(A) = 0$. Hence, $\theta$ is microergodic.
	
\end{proof}

\begin{description}\item[{Proof of Proposition~\ref{prop:approximation}}]
\end{description}

Recall that the empirical barycenters 
\((\mun)_n\) is a sequence of
continuous measures converging to \(\mub\) in 
\(2\)-Wasserstein distance: \(W_2(\mun, \mub) \rightarrow 0 \)
as \(n \rightarrow \infty\) and \(R_{n\sharp}\mub = \mun\) with $W_2^2( \bar{\mu},\bar{\mu}_n ) = || R_n ||_{L^2(\bar{\mu})}$. 


\begin{lemma}
	\label{lemma:comvergence_OT}
	Fix some distribution  \(\nu\) absolutely continuous with respect to Lebesgue measure and let $T = T_{\nu}$ and $T_n = T_{\nu,n}$.
	Then it holds a.s.
	\[
	\bigl\|T - T_n \bigr\|^2_{L^2(\nu)} \longrightarrow 0, ~\text{as}~ n \rightarrow \infty.
	\]
\end{lemma}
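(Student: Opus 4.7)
The plan is to exploit the convergence $W_2(\bar{\mu}_n, \bar{\mu}) \to 0$ granted almost surely by Theorem~\ref{th:empiricconsisten}, combined with the absolute continuity of the fixed source $\nu$, to invoke the classical stability of quadratic optimal transport under convergence of the target. I will argue pathwise on the almost sure event where $W_2(\bar{\mu}_n, \bar{\mu}) \to 0$.

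First, I would form the triple joint law $\Gamma_n = (\mathrm{id}, T_n, T)_{\sharp} \nu$ on $\mathbb{R}^p \times \mathbb{R}^p \times \mathbb{R}^p$, whose three marginals are $\nu$, $\bar{\mu}_n$ and $\bar{\mu}$ respectively. Each marginal has uniformly bounded second moment (the first and third are fixed, and the second converges in $W_2$), so $(\Gamma_n)$ is tight. Extracting a weakly convergent subsequence $\Gamma_{n_k} \to \Gamma^\star$, the limit has marginals $\nu, \bar{\mu}, \bar{\mu}$.

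Next, I would identify $\Gamma^\star$ as supported on the diagonal $\{(x,y,z) : y = z\}$. The $(1,2)$-projection of $\Gamma_{n_k}$ is $\pi_{n_k} = (\mathrm{id}, T_{n_k})_{\sharp}\nu$, which is the optimal coupling between $\nu$ and $\bar{\mu}_{n_k}$. By lower semicontinuity of the quadratic cost under weak convergence together with the continuity $W_2^2(\nu, \bar{\mu}_n) \to W_2^2(\nu, \bar{\mu})$ (itself a consequence of the triangle inequality for $W_2$), the $(1,2)$-projection of $\Gamma^\star$ is optimal between $\nu$ and $\bar{\mu}$. Since $\nu$ is absolutely continuous, Brenier's theorem yields uniqueness of this optimal coupling, so it must equal $(\mathrm{id}, T)_\sharp \nu$. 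The $(1,3)$-projection of $\Gamma^\star$ is by construction the constant sequence $(\mathrm{id}, T)_\sharp \nu$. Therefore, under $\Gamma^\star$, both the second and third coordinates coincide with $T$ applied to the first, so $y = z$ $\Gamma^\star$-almost surely.

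To conclude the $L^2$-convergence, I would write
$\|T_n - T\|_{L^2(\nu)}^2 = \int |y - z|^2 \, d\Gamma_n(x,y,z)$.
The integrand has quadratic growth, and both $\int |y|^2 d\Gamma_n = \int |y|^2 d\bar{\mu}_n$ and $\int |z|^2 d\Gamma_n = \int |z|^2 d\bar{\mu}$ converge, which supplies the uniform integrability required to promote weak convergence of $\Gamma_{n_k}$ to convergence of $\int |y-z|^2 d\Gamma_{n_k}$. Hence $\int |y-z|^2 d\Gamma_{n_k} \to \int |y-z|^2 d\Gamma^\star = 0$, and since every subsequence admits a further subsequence along which this holds, the whole sequence $\|T_n - T\|_{L^2(\nu)}^2$ converges to zero. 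The main obstacle will be rigorously justifying that the $(1,2)$-projection of the weak limit $\Gamma^\star$ is still an optimal coupling and then invoking Brenier uniqueness, as this is precisely the bridge from weak stability of the couplings to the genuine $L^2$-convergence of the transport maps.
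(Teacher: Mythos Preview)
Your argument is correct and follows the textbook stability theorem for Brenier maps (tightness of joint couplings, weak limit of optimal plans is optimal, Brenier uniqueness forces the graph structure, uniform integrability from $W_2$-convergence upgrades to $L^2$-convergence). It is, however, a genuinely different route from the paper's.

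The paper does not pass through weak compactness of couplings. Instead it introduces the optimal map $R_n$ from $\bar{\mu}$ to $\bar{\mu}_n$ and uses the composed map $R_n\circ T$ as a competitor transport from $\nu$ to $\bar{\mu}_n$. A change of variables plus the triangle inequality gives
\[
\bigl\|\mathrm{id} - R_n\circ T\bigr\|_{L^2(\nu)}
\;\le\; W_2(\nu,\bar{\mu}) + W_2(\bar{\mu},\bar{\mu}_n)
\;\le\; W_2(\nu,\bar{\mu}_n) + 2\,W_2(\bar{\mu},\bar{\mu}_n),
\]
so $\|\mathrm{id}-R_n\circ T\|_{L^2(\nu)}$ and the optimal cost $\|\mathrm{id}-T_n\|_{L^2(\nu)}=W_2(\nu,\bar{\mu}_n)$ have the same limit; the paper then argues by contradiction that $T_n$ and $R_n\circ T$ must share a common limit, and since $R_n\circ T\to T$ one concludes. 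Your approach has the advantage of being self-contained and fully rigorous: the passage from ``equal asymptotic cost'' to ``equal limiting map'' in the paper's contradiction step implicitly relies on exactly the Brenier-uniqueness mechanism that you make explicit. Conversely, the paper's device of composing with $R_n$ is what drives the subsequent Lemma~\ref{lem:true}, where the same change-of-variables trick is reused to compare $\|T_{\mu,n}^{-1}-T_{\nu,n}^{-1}\|_{L^2(\bar{\mu}_n)}$ with $\|T_\mu^{-1}-T_\nu^{-1}\|_{L^2(\bar{\mu})}$; your compactness argument does not immediately give that identity, so if you adopt your proof of the present lemma you will still want the $R_n$ construction for the next step.
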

\begin{proof}
	Fix \(n\) s.t. \(W_2(\mun, \mub) = \varepsilon_{n}\). 
	Consider \(\bigl\|\text{id} - R_n\circ T \bigr\|_{L^2(\nu)}\). 
	By change of variables and triangle inequality one obtains
	\begin{align*}
	\bigl\|\text{id} - R_n\circ T \bigr\|_{L^2(\nu)} 
	& = 
	\bigl\|T^{-1} - R_n \bigr\|_{L^2(\mub)}
	\leq 
	\bigl\|T^{-1} - \text{id} \bigr\|_{L^2(\mub)} 
	+ 
	\bigl\|R_n - \text{id} \bigr\|_{L^2(\mub)} \\
	& \leq W_2(\nu, \mub) +\varepsilon_n
	\leq W_2(\nu, \mun) +2\varepsilon_n.
	\end{align*}
	Since $T_n$ is the optimal transport map from $\nu$ to $\mu_n$ we recall that $W_2(\nu, \mun)= \bigl\|\text{id} - T_n \bigr\|_{L^2(\nu)} $. So  
	due to the arbitrary choice of \(n\) it follows
	\begin{equation}
	\label{eq:convergence}
	\Bigl| \bigl\|\text{id} - R_n\circ T\bigr\|_{L^2(\nu)} - \bigl\|\text{id} - T_n \bigr\|_{L^2(\nu)}  \Bigr| 
	\underset{n \rightarrow \infty}{\longrightarrow} 0.
	\end{equation}
	Now we are ready to prove, that \(\bigl\|T_n - T \bigr\|_{L^2(\nu)} \overset{n \rightarrow \infty}{\longrightarrow}  0 \).
	Assume the claim is wrong. 
	Assume the claim is wrong: 
	\[
	T_n \overset{n \rightarrow \infty}{\longrightarrow}  T_1, 
	\quad
	R_n\circ T \overset{n \rightarrow \infty}{\longrightarrow}  T_2, 
	\quad
	\|T_1 - T_2 \| > \varepsilon.
	\]
	Thus
	\[
	\bigl\|\text{id} - T_n \bigr\|_{L^2(\nu)} 
	\overset{n \rightarrow \infty}{\longrightarrow}  
	\bigl\|\text{id} - T_1 \bigr\|_{L^2(\nu)},
	\quad
	\bigl\|\text{id} - R_n\circ T \bigr\|_{L^2(\nu)}
	\overset{n \rightarrow \infty}{\longrightarrow}  
	\bigl\|\text{id} - T_2 \bigr\|_{L^2(\nu)},
	\]
	which contradicts to~\eqref{eq:convergence}
\end{proof}

The next lemma is a key ingredient in the proof of the fact
that the true kernel can be replaced by its empirical 
counterpart.
\begin{lemma} \label{lem:true}
	Consider two fixed absolutely continuous measures $\mu$ and $\nu$ in $\mathcal{W}_2(\mathbb{R}^p)$. We have a.s.
	\[
	\Bigl|
	\bigl\|T_{\mu}^{-1} - T_{\nu}^{-1} \bigr\|^2_{L^2(\mub)}
	-
	\bigl\|T^{-1}_{\mu, n} - T^{-1}_{\nu, n} \bigr\|^2_{L^2(\mun)}
	\Bigr| 
	\longrightarrow 0,~\text{as}~n\rightarrow \infty.
	\]
\end{lemma}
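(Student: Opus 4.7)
The plan is to reduce the statement to the convergence of a single transport map in $L^2(\bar{\mu})$, and then leverage near-optimality together with uniqueness of the Brenier map. Let $R_n$ denote the optimal transport from $\bar{\mu}$ to $\bar{\mu}_n$, so $\|R_n - \mathrm{id}\|_{L^2(\bar{\mu})} = W_2(\bar{\mu}, \bar{\mu}_n)$. I would introduce the auxiliary maps $A_n = T_{\mu,n}^{-1} \circ R_n$ and $B_n = T_{\nu,n}^{-1} \circ R_n$, which push $\bar{\mu}$ forward to $\mu$ and $\nu$ respectively. A change of variable via $R_n$ gives the key identity $\|T_{\mu,n}^{-1} - T_{\nu,n}^{-1}\|^2_{L^2(\bar{\mu}_n)} = \|A_n - B_n\|^2_{L^2(\bar{\mu})}$, so both kernels now live in the common Hilbert space $L^2(\bar{\mu})$. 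Writing $A = T_\mu^{-1}$ and $B = T_\nu^{-1}$, the identity $|\alpha^2-\beta^2|=(\alpha+\beta)|\alpha-\beta|$ combined with the reverse triangle inequality bounds
\[
\Bigl|\|A-B\|^2_{L^2(\bar{\mu})} - \|A_n-B_n\|^2_{L^2(\bar{\mu})}\Bigr| \leq \bigl(\|A-B\|_{L^2(\bar{\mu})} + \|A_n-B_n\|_{L^2(\bar{\mu})}\bigr)\bigl(\|A-A_n\|_{L^2(\bar{\mu})} + \|B-B_n\|_{L^2(\bar{\mu})}\bigr).
\]
The prefactor is uniformly bounded since any map pushing $\bar{\mu}$ to $\mu$ has the same $L^2(\bar{\mu})$-norm, namely the second moment of $\mu$ (and analogously for $\nu$). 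It therefore suffices to show $\|A - A_n\|_{L^2(\bar{\mu})} \to 0$ almost surely; the $B$-case is symmetric.

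The second step is to verify that $A_n$ is asymptotically an optimal transport from $\bar{\mu}$ to $\mu$. Since $(\mathrm{id}, A_n)_\sharp \bar{\mu}$ is a coupling of $\bar{\mu}$ and $\mu$, on the one hand $\|A_n - \mathrm{id}\|_{L^2(\bar{\mu})} \geq W_2(\bar{\mu}, \mu)$. On the other hand, by triangle inequality
\[
\|A_n - \mathrm{id}\|_{L^2(\bar{\mu})} \leq \|T_{\mu,n}^{-1}\circ R_n - R_n\|_{L^2(\bar{\mu})} + \|R_n - \mathrm{id}\|_{L^2(\bar{\mu})};
\]
a change of variable via $R_n$ turns the first term into $\|T_{\mu,n}^{-1} - \mathrm{id}\|_{L^2(\bar{\mu}_n)} = W_2(\bar{\mu}_n, \mu)$, while the second is $W_2(\bar{\mu}, \bar{\mu}_n)$. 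Theorem~\ref{th:empiricconsisten} and continuity of $W_2$ then force $\|A_n - \mathrm{id}\|_{L^2(\bar{\mu})} \to W_2(\bar{\mu}, \mu)$ a.s.

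The hard part is to upgrade this asymptotic optimality of the cost to $L^2$-convergence of the map itself; this is the main obstacle and I would handle it via a stability-of-Brenier-maps argument. The couplings $\pi_n = (\mathrm{id}, A_n)_\sharp \bar{\mu}$ have fixed marginals $\bar{\mu}$ and $\mu$, hence are tight; by lower semi-continuity of the quadratic cost and the previous step, every weak subsequential limit is optimal. Absolute continuity of $\bar{\mu}$ makes the optimal coupling unique and equal to $(\mathrm{id}, A)_\sharp \bar{\mu}$, so $\pi_n$ converges weakly to it along the entire sequence. Because both $A_n$ and $A$ push $\bar{\mu}$ onto $\mu$,
\[
\|A_n - A\|^2_{L^2(\bar{\mu})} = 2\int \|y\|^2 d\mu - 2\int A_n(x)\cdot A(x)\, d\bar{\mu}(x) = 2\int \|y\|^2 d\mu - 2\int y\cdot A(x)\, d\pi_n(x,y).
\]
I would then approximate $A\in L^2(\bar{\mu})$ by bounded continuous maps $\tilde A$, apply weak convergence of $\pi_n$ to the bilinear test function $y\cdot \tilde A(x)$, and control the approximation error through the uniform bound $\int \|y\|^2 d\pi_n = \int \|y\|^2 d\mu$ by a standard three-epsilon argument (with Cauchy–Schwarz giving $|\int y\cdot(A-\tilde A)(x)\,d\pi_n| \leq \|A_n\|_{L^2(\bar{\mu})}\|A-\tilde A\|_{L^2(\bar{\mu})}$). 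This yields $\|A_n - A\|_{L^2(\bar{\mu})} \to 0$ a.s., symmetrically for $B_n$, and the lemma follows from the reduction in the first paragraph. Without uniqueness of the Brenier map (hence without absolute continuity of $\bar{\mu}$) or without the Hilbert-space identity that converts $L^2$-distance between maps with a common push-forward into an inner product statement, cost optimality alone would not suffice to extract $L^2$-convergence.
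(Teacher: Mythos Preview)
Your reduction is essentially the paper's: both introduce $R_n$, change variables to land in $L^2(\bar{\mu})$, and reduce via the triangle inequality to showing $\|T_{\mu,n}^{-1}\circ R_n - T_\mu^{-1}\|_{L^2(\bar{\mu})}\to 0$. Where you differ is in how this last convergence is obtained. The paper packages it into an auxiliary result (Lemma~\ref{lemma:comvergence_OT}), whose proof first shows that $\|\mathrm{id}-R_n\circ T_\nu\|_{L^2(\nu)}$ and $\|\mathrm{id}-T_{\nu,n}\|_{L^2(\nu)}$ share the same limit and then argues by contradiction. You instead give a self-contained Brenier-stability argument: near-optimality of the coupling $(\mathrm{id},A_n)_\sharp\bar{\mu}$, tightness, lower semicontinuity of the quadratic cost and uniqueness of the optimal coupling force $\pi_n\to(\mathrm{id},A)_\sharp\bar{\mu}$ weakly, and the fixed second marginal (hence fixed $L^2$ norms and uniform integrability of $y$) lets you upgrade this to $\|A_n-A\|_{L^2(\bar{\mu})}\to 0$. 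This is the standard stability mechanism for Monge maps and makes the role of absolute continuity of $\bar{\mu}$ explicit; your route is more detailed and arguably more robust than the paper's terse contradiction step, at the price of being longer.
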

\begin{proof}
	Consider \(\bigl\|T^{-1}_{\mu, n} - T^{-1}_{\nu, n} \bigr\|_{L^2(\mun)} \). 
	Change of variables and triangle inequality yield
	\begin{align*}
	&\bigl\|T^{-1}_{\mu, n}  - T^{-1}_{\nu, n} \bigr\|_{L^2(\mun)}
	= 
	\bigl\|T^{-1}_{\mu, n}\circ R_n - T^{-1}_{\nu, n}\circ R_n \bigr\|_{L^2(\mub)}\\
	& \leq  \bigl\|T^{-1}_{\mu, n}\circ R_n - T^{-1}_{\mu} \bigr\|_{L^2(\mub)}
	+ \bigl\|T^{-1}_{\nu, n}\circ R_n - T_{\nu}^{-1} \bigr\|_{L^2(\mub)}
	+ \bigl\|T^{-1}_{\mu} - T_{\nu}^{-1} \bigr\|_{L^2(\mub)}.
	\end{align*}
	Therefore one obtains
	\begin{align*}
	&\bigl\|T^{-1}_{\mu, n}  - T^{-1}_{\nu, n} \bigr\|_{L^2(\mun)} 
	- \bigl\|T^{-1}_{\mu} - T_{\nu}^{-1} \bigr\|_{L^2(\mub)} \\
	&\leq
	\bigl\|T^{-1}_{\nu, n}\circ R_n - T_{\nu}^{-1} \bigr\|_{L^2(\mub)} 
	+  \bigl\|T^{-1}_{\mu, n}\circ R_n - T_{\mu}^{-1} \bigr\|_{L^2(\mub)} 
	\overset{n \rightarrow \infty}{\longrightarrow} 0
	\end{align*}
	where the last relation holds due to Lemma~\ref{lemma:comvergence_OT}.
\end{proof}

\begin{description}\item[{Proof of Proposition~\ref{p:rate}}]
\end{description}
\begin{proof}
	Actually using Lemma A.2 together with Theorem 2.2 in~\citet{kroshnin2019statistical}, we obtain that
	$$ \| R_n - {\rm Id}\|_{L^2(\mub)} =O_P\left(\frac{1}{\sqrt{n}}\right),  $$ and that the empirical transportation plan can be linearized as
	$$ T_{i,n}^{-1}=T_{i}^{-1}+D(\Sb_n-\Sb)+o(\|\Sb_n-\Sb \|_F)
	,$$ where $D$ is a linear self-adjoint bounded operator acting on the space of symmetric matrices. 
	Use the following decomposition
	\begin{align*}
	\| T_{i,n}^{-1} \circ R_n - T_i^{-1} \|_{L^2(\mub)}   \leq &  \|  T_i^{-1} \circ R_n - T_i^{-1} + (T_{i,n}^{-1}-T_i^{-1})\circ R_n \|_{L^2(\mub)} \\
	\leq & \|  T_i^{-1} \circ R_n - T_i^{-1}  \|_{L^2(\mub)} + \| (T_{i,n}^{-1}-T_i^{-1})\circ R_n \|_{L^2(\mub)} \\
	\leq & O_P\left(\frac{1}{\sqrt{n}}\right).
	\end{align*}
	This entails that $\|T_{i,n}^{-1} \ - T_{j,n}^{-1} \|_{L^2(\bar{\mu}_n)} -\|T_{i}^{-1} \ - T_j^{-1} \|_{L^2(\mub)}  $ is also of order $\frac{1}{\sqrt{n}}$ since 
	$$ \bigl\|T^{-1}_{i, n}  - T^{-1}_{j, n} \bigr\|_{L^2(\bar{\mu}_n)} - \bigl\|T^{-1}_{i} - T_{j}^{-1} \bigr\|_{L^2(\mub)}
	\leq  \bigl\|T^{-1}_{i, n}\circ R_n - T^{-1}_{i} \bigr\|_{L^2(\mub)}
	+ \bigl\|T^{-1}_{j, n}\circ R_n - T_{j}^{-1} \bigr\|_{L^2(\mub)}
	$$
	Since for all $(i,j) \in [1,N]$, $$K_{n}(i,j) = F (\|T_{i,n}^{-1}-T_{j,n}^{-1}\|^2_{L^2(\bar{\mu}_n)})$$ as soon as $F$ is continuously differentiable with bounded derivative, then we get that for a finite constant
	$$ \sum_{i,j=1}^N | K_{n}(i,j) - K(i,j) |^2 \leq N^2 \sup_{i,j} | K_{n}(i,j) - K(i,j) |^2 \leq C \frac{N^2}{n},$$ which concludes the proof.
\end{proof}

\begin{description}\item[{Proof of Proposition~\ref{prop:stationarity}}]
\end{description}
\begin{proof}
	Note, that for any orthogonal matrix \(U\) the following set of inequalities hold:
	\begin{align*}
	W^2_2\left(\mathcal{N}(0, S),  \mathcal{N}(0, Q)\right)  & := \text{tr}(S) + \text{tr}(Q) - 2 \text{tr} \left(Q^{1/2}S Q^{1/2} \right)^{1/2}\\
	& =  W^2_2\left(\mathcal{N}(0, USU^{T}),  \mathcal{N}(0, UQU^{T})\right)\\
	& =  W^2_2\left(\phi_{U}(\mathcal{N}(0, S)),  \phi_{U}(\mathcal{N}(0, Q))\right).
	\end{align*}
	Thus, map \(\phi_U\) preserves \(2\)-Wasserstein distance.
	Equality~\eqref{eq:stationarity} follows from ~\eqref{eq:OT_map_norm} by substituting \(S_i\), \(S_j\), and \(\Sb\) by
	\(US_iU^{T}\), \(US_jU^{T}\), and \(U\Sb U^{T}\) respectively.
\end{proof}

\end{document}